\newtheorem{theorem}{Theorem}
\newcommand{\DISJ}{\mathsf{DISJ}}
\newcommand{\SQR}{\mathsf{SQR}}
\newtheorem{corollary}{Corollary}
\newtheorem{lemma}{Lemma}
\newtheorem{proposition}{Proposition}
\newtheorem{definition}{Definition}
\title{From expanders to hitting distributions and simulation theorems}
\author{Alexander Kozachinskiy \thanks{Lomonosov Moscow State University, kozlach@mail.ru}}
\date{\vspace{-5ex}}
\begin{document}
\maketitle

\begin{abstract}

Recently, Chattopadhyay et al. (\cite{chattopadhyay2017simulation}) proved that any gadget having so called \emph{hitting distributions} admits deterministic ``query-to-communication'' simulation theorem. They applied this result to Inner Product, Gap Hamming Distance and Indexing Function.  They also demonstrated that previous works used hitting distributions implicitly (\cite{goos2015deterministic} for Indexing Function and \cite{wu2017raz} for Inner Product).

In this paper we show that any expander in which any two distinct vertices have at most one common neighbor can be transformed into a gadget possessing good hitting distributions. We demonstrate that this result  is applicable to  affine plane expanders and to Lubotzky-Phillips-Sarnak construction of Ramanujan graphs .  In particular, from affine plane expanders we extract a gadget achieving the best known trade-off between the arity of outer function and the size of gadget. More specifically, when this gadget has $k$ bits on input, it admits a simulation theorem for all outer function of arity roughly $2^{k/2}$ or less (the same was also known for $k$-bit Inner Product, (\cite{chattopadhyay2017simulation})). In addition we show that, unlike Inner Product, underlying hitting distributions in our new gadget are ``polynomial-time listable'' in the sense that their supports can be written down  in time $2^{O(k)}$, i.e, in time polynomial in size of gadget's matrix. 

 We also obtain two results showing that with current technique no better trade-off between the arity of outer function and the size of gadget can be achieved. Namely, we observe that  no gadget can have hitting distributions with significantly better parameters than Inner Product or our new affine plane gadget. We also show that Thickness Lemma, a place which causes restrictions on the arity of outer functions in proofs of simulation theorems, is unimprovable.  Finally, we explore hitting distributions for Disjointness predicate on $k$-element subsets of $\{1, 2, \ldots, n\}$

\end{abstract}

\section{Introduction}

Assume that we have a Boolean function $f:\{0, 1\}^n\to\{0, 1\}$  called \emph{outer function} and a Boolean function $g:A\times B\to\{0, 1\}$ called \emph{gadget}. Consider a composed function $f\circ g : A^n \times B^n\to \{0, 1\}$, defined as follows:
$$ (f\circ g)((a_1, \ldots, a_n), (b_1, \ldots, b_n) ) = f(g(a_1, b_1), \ldots, g(a_n, b_n)).$$
How can we deal with deterministic communication complexity of $f\circ g$, denoted below by $D^{cc}(f\circ g)$? Obviously, we have the following inequality:
$$D^{cc}(f\circ g) \le D^{dt}(f) \cdot D^{cc}(g),$$
where $D^{dt}(f)$ stands for deterministic query complexity of $f$. Indeed,  we can transform a decision tree for $f$ making $q$ queries into a  protocol of communication cost $q \cdot D^{cc}(g)$  by simulating each query to $f$ with $D^{cc}(g)$ bits. It turns out that for some gadgets $g$ and for all $f$ of arity at most some function of $g$'s size this simple protocol is essentially optimal. The first gadget for which this was proved is the Indexing Function $$\mathsf{IND}_k:\{1, 2, \ldots, k\} \times \{0, 1\}^k \to\{0, 1\}, \qquad g(x, y) = y_x.$$
More specifically, in 2015 G\"{o}\"{o}s et al. (\cite{goos2015deterministic}) proved that for all $n\le 2^{k^{1/20}}$ and for all $f:\{0, 1\}^n\to\{0, 1\}$ it holds that
\begin{equation}
\label{index}
D^{cc}(f\circ \mathsf{IND}_k) = \Omega(D^{dt}(f) \log k).
\end{equation}
Actually, instead of $f$ we can have not only a Boolean function but any relation $R\subset\{0, 1\}^n \times C$. The work of  G\"{o}\"{o}s et al. was a generalization of the theorem of Raz and McKenzie (\cite{raz1997separation}), who in 1997 established \eqref{index} for a certain class of outer relations, called DNF-Search problems. 

Theorems of this kind, called usually \emph{simulation theorems}, can be viewed as a new method of proving lower bounds in communication complexity. Namely,  lower bound on  communication complexity of a composed function reduces to lower bound on  query complexity of an outer function, and usually it is much easier to deal with the latter. As was shown by Raz and McKenzie, this method turns out to be powerful enough to separate monotone NC-hierarchy. Moreover, as was discovered by G\"{o}\"{o}s et al., this method can be quadratically better than the logarithm of the partition number, another classical lower bound method in deterministic communication complexity.

There are simulation theorems not only for deterministic communication and query complexities, but for other models too, see, e.g., \cite{de2016limited, goos2017query, hatami2016structure, goos2016rectangles}.

Note that  input length of a gadget in \eqref{index} is even bigger than  input length of an outer function.  G\"{o}\"{o}s et al. in \cite{goos2015deterministic} asked, whether it is possible to prove a simulation theorem for a gadget which input length is logarithmic in  input length of an outer function. This question was answered positively by Chattopadhyaay et al. (\cite{chattopadhyay2017simulation}) and independently by Wu et al. (\cite{wu2017raz}). Moreover, Chattopadhyaay et al. significantly generalized the proof of   G\"{o}\"{o}s et al., having discovered a certain property of a gadget $g: A\times B \to\{0, 1\}$ which can be used as a black-box to show new simulation theorems: once $g$ satisfies this property, we have a simulation theorem for $g$. Their property is defined through so-called ``hitting distributions''.    Let $\mu$ be probability distribution over rectangles $U\times V\subset A\times B$. Distribution $\mu$ is called \emph{$(\delta, h)$-hitting}, where $\delta\in (0, 1)$ and $h$ is a positive integer, if for every $X\subset A$ of size at least $2^{-h}|A|$ and for every $Y\subset B$ of size at least $2^{-h}|B|$ we have that

$$\Pr_{U\times V \sim \mu}[U\times V \cap X\times Y \neq \varnothing] \ge 1- \delta$$

It turns out that if for every $b\in\{0, 1\}$ there is $(\delta, h)$-hitting distribution over $b$-monochromatic rectangles of $g$ , then there is a simulation theorem for $g$. The smaller $\delta$ and the bigger $h$, the better simulation theorem. More precisely,  Chattopadhyaay et al. proved the following theorem.

\begin{theorem}
\label{simulation_theorem}
Assume that $\varepsilon \in (0, 1)$ and and an integer $h$ are such that $h \ge 6/\varepsilon$. Then the following holds. For every (possibly partial) Boolean function $g:A\times B \to \{0, 1\}$ that has two $(\frac{1}{10}, h)$-hitting distribution, the one over 0-monochromatic rectangles and the other over 1-monochromatic rectangles, for every $n \le 2^{h(1 - \varepsilon)}$ and $f:\{0, 1\}^n\to\{0, 1\}$ it holds that
$$D^{cc}(f\circ g^n) \ge \frac{\varepsilon h}{4} \cdot D^{dt}(f).$$
\end{theorem}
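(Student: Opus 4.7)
The plan is to prove Theorem~\ref{simulation_theorem} by contradiction. Suppose that there is a protocol $\Pi$ of cost $c < (\varepsilon h/4)\cdot D^{dt}(f)$ computing $f\circ g^n$. I would extract from $\Pi$ a deterministic decision tree for $f$ making strictly fewer than $D^{dt}(f)$ queries, contradicting the definition of $D^{dt}(f)$. The tree is produced by a simulator that walks down $\Pi$ while maintaining a state $(v,\,X\times Y,\,F,\,z_F)$, where $v$ is the current node of $\Pi$, $X\times Y=\prod_i X_i\times\prod_i Y_i$ is a product rectangle in $A^n\times B^n$, $F\subseteq[n]$ is the set of already-queried coordinates, and $z_F\in\{0,1\}^F$ stores the recorded answers. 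Three invariants are maintained: (consistency) every input in $X\times Y$ reaches $v$ in $\Pi$; (monochromaticity) for $i\in F$, the gadget $g$ is constantly $z_i$ on $X_i\times Y_i$; (thickness) $|X_i|\ge 2^{-h}|A|$ and $|Y_i|\ge 2^{-h}|B|$ for all $i\notin F$.

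When the simulator reads a bit of $\Pi$, it partitions $X$ (or $Y$) by the sender's bit and retains the larger half, losing at most one bit of total density. The retained set need not be a product nor remain thick coordinate-wise, so a Thickness Lemma is invoked to prune back to a thick product sub-rectangle. If after this clean-up some coordinate $i\notin F$ is still thin, the decision tree queries $z_i$, and the hypothesis of the theorem is used: the simulator samples a $z_i$-monochromatic rectangle $U\times V$ from the given $(1/10,h)$-hitting distribution and updates $X_i\leftarrow U\cap X_i$, $Y_i\leftarrow V\cap Y_i$. Because $X_i, Y_i$ were above the thickness threshold just before this step, the hitting property guarantees (with probability $\ge 9/10$) that the replacement is non-empty; one derandomizes by fixing a good sample sequence in advance. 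This update simultaneously restores monochromaticity at $i$ and preserves consistency with the path to $v$.

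The query count is controlled by an amortized potential argument on the total deficiency $\Delta:=\sum_{i\notin F}\log(|A|/|X_i|)+\sum_{i\notin F}\log(|B|/|Y_i|)$. Each protocol bit raises $\Delta$ by at most~$1$ (from taking the larger half, plus $O(1)$ absorbed by the Thickness Lemma), and a coordinate only becomes thin after at least $h$ units of deficiency accumulate on it, so at most $c/h=\varepsilon D^{dt}(f)/4<D^{dt}(f)$ queries are issued across the simulation. When $\Pi$ terminates at a leaf with output $b$, consistency forces $(f\circ g^n)(a,b)=b$ on all of $X\times Y$; thickness lets us extend $z_F$ to a full $z\in\{0,1\}^n$ by choosing witnesses in the still-alive coordinates, so $f(z)=b$ and the decision tree outputs~$b$.

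The main obstacle I expect is stitching the Thickness Lemma together with the hitting step in a way that is both deterministic and consistent: the Lemma prunes inputs (which can threaten the invariant that $X\times Y$ is exactly the set consistent with reaching $v$), while a single hitting sample succeeds only with probability $9/10$. The quantitative assumption $h\ge 6/\varepsilon$ and the budget $n\le 2^{h(1-\varepsilon)}$ are what allow the Thickness Lemma to be applied iteratively throughout the simulation without exhausting the density budget and allow the $\le n$ hitting samples to be fixed in advance; the factor $1/4$ in $\varepsilon h/4$ absorbs the small additive slack from re-productifying after each protocol bit and from the $1/10$ failure probability of each individual hit.
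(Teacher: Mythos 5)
First, a point of reference: the paper never proves Theorem~\ref{simulation_theorem} itself --- it is imported as a black box from \cite{chattopadhyay2017simulation} --- so your sketch has to be measured against that standard Raz--McKenzie/G\"o\"os--Pitassi--Watson style argument. In overall shape you are on that route: simulate the protocol by a decision tree, maintain consistency/monochromaticity/thickness invariants, spend a hitting rectangle at each query, interleave the Thickness Lemma, and amortize communication against queries.

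The genuine gap is your structural invariant. You maintain the alive set as a coordinate-wise product $\prod_i X_i\times\prod_i Y_i$ with thickness meaning $|X_i|\ge 2^{-h}|A|$, and you claim that after each communication bit the Thickness Lemma lets you ``prune back to a thick product sub-rectangle.'' Lemma~\ref{thickness_lemma} does no such thing: it yields a subset $X'\subseteq X\subseteq A^n$ whose projection graphs $G_i(X')$ have large \emph{minimum degree}; $X'$ is not a product across coordinates, and no lemma of the kind you need can exist. Indeed, take $A=\{0,1\}$, $X=A^n$, and let Alice's first bit be the parity of her string: each half has density $1/2$, yet any product $\prod_i X_i$ contained in one half must have every $|X_i|=1$ (if some $|X_i|=2$, flipping that coordinate changes the parity while staying in the product), so the largest product subset has size $1$. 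This is precisely why the actual proof tracks arbitrary sets $X\subseteq A^n$, $Y\subseteq B^n$ and measures thickness by the min-degree of $G_i(X)$ (number of extensions in coordinate $i$ of each partial string), and why the arity bound $n\le 2^{h(1-\varepsilon)}$ enters through the $\delta d/n$ degree loss in Lemma~\ref{thickness_lemma}, not through ``not exhausting the density budget over many iterations'' as you suggest; your per-coordinate deficiency accounting (one unit per bit, $h$ units before a query) lives on the product picture and does not survive the correction. Two smaller issues: the hitting step must deliver more than non-emptiness of $U\cap X_i$ and $V\cap Y_i$ --- in the real argument one needs a single $z_i$-monochromatic rectangle that simultaneously hits a whole \emph{family} of rectangles induced by the thick set, so that thickness/largeness on the unqueried coordinates is preserved (this is exactly how the paper itself describes the use of hitting distributions); and ``fixing a good sample sequence in advance'' is neither available nor needed, since the sets encountered are adaptive but, at the moment of each query, already determined by the deterministic simulation, so one simply invokes the definition (failure probability $\frac{1}{10}<1$) and greedily picks an intersecting rectangle from the support.
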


Further, they showed that Inner Product and Gap Hamming Distance gadgets on $k$  bits have $(o(1), \Omega(k))$-hitting distributions for both kinds of monochromatic rectangles. More precisely, for every constant $\gamma > 0$ and for all large enough $k$ they constructed $(o(1), (1/2 - \gamma)k)$-hitting distributions for $k$-bit Inner Product (denoted below by $\mathsf{IP}_k$) . Due to Theorem \ref{simulation_theorem} this yealds the following simulation theorem for $\mathsf{IP}_k$: for every constant $\gamma > 0$ and for all  $k$ large enough
$$D^{cc}(f\circ \mathsf{IP}_k) = \Omega(D^{dt}(f) \cdot k),$$
where $f$ is any Boolean function depending on at most $2^{(1/2 - \gamma) k}$ variables. Other gadgets studied until this work do not achieve the same trade-off between the size of   outer functions and the size of gadget. Namely,  for $k$-bit Gap Hamming Distance the lower bound $D^{cc}(f\circ \mathsf{GHD}) = \Omega(D^{dt}(f) \cdot k)$ is shown in \cite{chattopadhyay2017simulation}  only for $f$ depending on roughly $2^{0.45 k}$ variables or less. For Indexing gadget, as we saw, this trade-off is exponentially worse.

It is also interesting to study how fast one can obtain a full description of hitting distributions for these gadgets. This might be useful in the following situation: assume that we are given a family of subrectangles  (not  necessarily monochromatic) of a gadget's  matrix  and we want to find single monochromatic   rectangle which intersects most of them (actually, this is how hitting distributions are used in Theorem \ref{simulation_theorem}). The existence of such monochromatic rectangle is provided by definition of hitting distribution. However, to find such rectangle efficiently we at least must be able to list all the rectangles from the support of our hitting distribution.  If it can be done in time polynomial in size of gadget's matrix, we call a corresponding family of hitting distributions \emph{polynomial-time listable.} In particular, this applies to  $k$-bit Gap Hamming Distance: hitting distribution from \cite{chattopadhyay2017simulation} for this gadget are polynomial-time listable (roughly speaking, we just have to list all Hamming balls of a certain radius). At the same time, hitting distributions for $k$-bit Inner Product   from \cite{chattopadhyay2017simulation} are not polynomial-time listable. Namely, their supports are of size $2^{\Omega(k^2)}$ (this number corresponds to the number of $k/2$-dimensional subspaces of $\mathbb{F}_2^k$). Though due to Chernoff bound it is possible to transform any $(0.1, h)$-hitting distribution into , say, $(0.2, h)$-hitting distribution with support size $2^{O(k)}$ (see Proposition \ref{size_proposition} below), this does not give explicit construction.

\subsection{Our results}

 We show how to transform any explicit expander satisfying one additional restriction into a gadget with polynomial-time listable hitting distributions. The transformation is as follows. Assume that we have a graph $G = (V, E)$ and a coloring $c:V \to\{0, 1\}$. For $v\in V$ let $\Gamma(v)$ denote the set of all $u\in V$ such that   $u$ and $v$ are connected by en edge in $G$. Assume further that for any two distinct $u, v\in V$ it holds that $|\Gamma(u)\cap \Gamma(v)| \le 1$. Then the following partial function is well defined:
\begin{align*}
&g(G, c): V\times V\to\{0, 1\},\\ &g(G, c)(u, v)  = \begin{cases}1 & \mbox{$u\neq v$ and there is $w\in\Gamma(u)\cap \Gamma(v)$ s.t. $c(w) = 1$}, \\ 0 & \mbox{$u\neq v$ and there is $w\in\Gamma(u)\cap \Gamma(v)$ s.t. $c(w) = 0$}, \\
\mbox{undefined} & \mbox{otherwise.}\end{cases}
\end{align*}

Call $c$ balanced if each color is used at least $|V|/3$ times in $c$. It turns out that if $G$ is a good expander and if $c$ is balanced, then $g(G, c)$ possesses good hitting distributions:

\begin{theorem}
\label{from_expanders_to_hitting}
Assume that $G = (V, E)$ is a $(m, d, \gamma)$-spectral expander in which for any two distinct $u, v\in V$ it holds that $|\Gamma(u)\cap \Gamma(v)| \le 1$ and $c:V \to\{0, 1\}$ is a balanced coloring of $G$. Assume also that $m\ge 1/\gamma^2$. Then for any $b\in\{0, 1\}$ there is a $\left(\frac{1}{10}, \lfloor2\log_2(1/\gamma)\rfloor - 100\right)$-hitting distribution $\mu_b$ over $b$-monochromatic rectangles of $g(G, c)$.   All the probabilities of $\mu_b$ are rational. Moreover, there is a deterministic Turing machine which, having $G$ and $c$ on input, in time $m^{O(1)}$ lists all the rectangles from the support of $\mu_b$, together with probabilities  $\mu_b$ assigns to them.
\end{theorem}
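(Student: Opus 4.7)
The plan is to take the simplest possible distribution: writing $B_b = \{w \in V : c(w) = b\}$, let $\mu_b$ be uniform on the family $\{\Gamma(w) \times \Gamma(w) : w \in B_b\}$, so that each rectangle receives rational probability $1/|B_b|$. Balanced coloring gives $|B_b| \ge m/3$. Listing the support reduces to scanning $V$ for color-$b$ vertices and printing their neighborhoods, which takes time $O(m \cdot d) = m^{O(1)}$; this handles the rationality and listability parts of the conclusion immediately.

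Monochromaticity is the one place where the ``at most one common neighbor'' hypothesis is used. For any $(u, v) \in \Gamma(w) \times \Gamma(w)$ with $u \neq v$, the chosen root $w$ lies in $\Gamma(u) \cap \Gamma(v)$; by hypothesis this intersection has size at most $1$, so $w$ is the unique common neighbor and therefore $g(G,c)(u,v) = c(w) = b$. On the diagonal $u = v$ the value is undefined, which is consistent with a $b$-monochromatic rectangle of a partial function.

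The only nontrivial step is verifying the hitting property. Fix $X, Y \subseteq V$ with $|X|, |Y| \ge m / 2^h$, where $h = \lfloor 2 \log_2(1/\gamma) \rfloor - 100$. A rectangle $\Gamma(w) \times \Gamma(w)$ fails to intersect $X \times Y$ exactly when $\Gamma(w) \cap X = \varnothing$ or $\Gamma(w) \cap Y = \varnothing$. Let $B_b^X = \{w \in B_b : \Gamma(w) \cap X = \varnothing\}$; then the pair $(X, B_b^X)$ has no edges between it, so the expander mixing lemma gives
$$\frac{d\, |X|\, |B_b^X|}{m} \;\le\; \gamma d \sqrt{|X|\, |B_b^X|}, \qquad \text{hence} \qquad |B_b^X| \;\le\; \frac{\gamma^2 m^2}{|X|}.$$
Plugging in $|X| \ge m \gamma^2 \cdot 2^{100}$ (which is what the definition of $h$ unpacks to, and which the hypothesis $m \ge 1/\gamma^2$ makes at least $2^{100} \ge 1$) yields $|B_b^X| \le m/2^{100}$, so $|B_b^X|/|B_b| \le 3/2^{100}$. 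The same bound applies with $Y$ in place of $X$, and a union bound gives $\Pr_{\mu_b}[\text{miss}] \le 6/2^{100} < 1/10$.

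I do not expect any of the three steps to present a serious obstacle; the only thing one must be careful about is propagating the constants through the floor in the definition of $h$ and through the factor $|B_b| \ge m/3$ from balancedness. The constant $100$ built into the statement is deliberately lavish precisely so that this bookkeeping needs no cleverness, and the hypothesis $m \ge 1/\gamma^2$ is there only to rule out the degenerate regime in which the nominal lower bound $m/2^h$ on $|X|$, $|Y|$ falls below $1$.
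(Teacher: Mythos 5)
The gap is in the monochromaticity step. The paper explicitly requires a $b$-monochromatic rectangle of a partial function to be a rectangle on which $g$ is \emph{everywhere defined} (Preliminaries: ``we require that $g$ is everywhere defined on $R$''). Your rectangles $\Gamma(w) \times \Gamma(w)$ contain the diagonal pairs $(u,u)$, and $g(G,c)$ is undefined there, so under the paper's definition they are not $b$-monochromatic rectangles. Your claim that an undefined diagonal is ``consistent with a $b$-monochromatic rectangle of a partial function'' is exactly what the paper's convention rules out; as a result, the $\mu_b$ you constructed is not a distribution over $b$-monochromatic rectangles, and the theorem's conclusion does not follow from your argument.

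This cannot be patched by deleting the diagonal, since $\Gamma(w)^2$ minus the diagonal is not a combinatorial rectangle. The paper resolves it by randomly splitting $\Gamma(w)$ into two \emph{disjoint} subsets $A$ and $B$ (using a $10$-wise independent hash function) and taking the rectangle $A \times B$; disjointness makes the diagonal vanish. But this incurs real additional work in the hitting analysis: one must now show that $A$ intersects $X$ \emph{and} $B$ intersects $Y$, which the paper does by first proving $\Pr[|\Gamma(v) \cap X| \geq 10] \geq 0.97$ (via a partition of $X$ into $10$ pieces, each still above the expansion threshold, with a union bound) and then invoking $10$-wise independence to bound the conditional probability that all ten witnesses land on the wrong side of the split. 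Your expander-mixing-lemma estimate $|B_b^X| \leq \gamma^2 m^2/|X|$ is a legitimate substitute for the paper's vertex-expansion lemma in that counting step (apply it to each of the ten pieces), and your constants have ample slack, but the random-split mechanism and the accompanying probabilistic argument are a genuinely missing idea, not a bookkeeping detail.
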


Provided that $G$'s adjacency matrix and $c$'s truth table can be computed in time $m^{O(1)}$, from Theorem \ref{from_expanders_to_hitting} we obtain polynomial-time listable family of hitting distributions.

In particular, we apply Theorem \ref{from_expanders_to_hitting} to the following  explicit family of expanders.  If $q$ is a power of prime, let $AP_q$ denote a graph in which vertices are pairs of elements of $\mathbb{F}_q$ and in which $(a, b), (x, y)\in \mathbb{F}_q^2$ are connected by an edge if and only if $ax = b + y$. It is known that $AP_q$ is a $(q^2, q, 1/\sqrt{q})$-spectral expander. It can be easily shown that for any two distinct vertices $u, v$ of $AP_q$ it holds that $|\Gamma(u)\cap \Gamma(v)|\le 1$.

\begin{corollary}
\label{main_corollary} 
 Let $q$ be a power of prime. Then in $AP_q$ for any two distinct vertices $u, v$ it holds that $|\Gamma(u)\cap \Gamma(v)| \le 1$. Moreover,  for all $n \le 2^{\log_2 q - 200}$ and $f:\{0, 1\}^n\to\{0, 1\}$ the following holds: if $c$ is a balanced coloring of $AP_q$, then
$$D^{cc}(f\circ g(AP_q, c)) \ge \frac{\log_2(q/n) - 200}{4} \cdot D^{dt}(f)$$
(in $g(AP_q, c)$ each party receives $2\log_2 q$ bits).

\end{corollary}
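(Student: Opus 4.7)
My plan is to handle the two assertions in turn. For the bound $|\Gamma(u)\cap\Gamma(v)|\le 1$, I take distinct vertices $u=(a_1,b_1)$ and $v=(a_2,b_2)$ and analyse the system that any common neighbour $(x,y)$ must satisfy: $a_1x=b_1+y$ and $a_2x=b_2+y$. Subtracting yields $(a_1-a_2)x=b_1-b_2$. If $a_1\ne a_2$, this pins $x$ down uniquely and then $y=a_1x-b_1$ is forced, so there is at most one common neighbour. If $a_1=a_2$, distinctness of $u,v$ forces $b_1\ne b_2$, the subtracted equation is inconsistent, and there are no common neighbours at all.

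For the communication lower bound, I first apply Theorem~\ref{from_expanders_to_hitting} to $AP_q$ with parameters $m=q^2$, $d=q$, $\gamma=1/\sqrt{q}$. The precondition $m\ge 1/\gamma^2$ reduces to $q^2\ge q$, which is trivial, and the common-neighbour condition was just verified. Since $c$ is assumed balanced, the theorem provides, for each $b\in\{0,1\}$, a $(1/10,h)$-hitting distribution over the $b$-monochromatic rectangles of $g(AP_q,c)$, where $h=\lfloor 2\log_2\sqrt{q}\rfloor-100=\lfloor\log_2 q\rfloor-100\ge\log_2 q-101$.

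Then I feed these two distributions into Theorem~\ref{simulation_theorem} with the choice $\varepsilon:=1-(\log_2 n)/h$, which makes the arity cap $n\le 2^{h(1-\varepsilon)}$ hold with equality. The hypothesis $\log_2 n\le\log_2 q-200$ together with $h\ge\log_2 q-101$ gives $\varepsilon h=h-\log_2 n\ge 99$, so the side condition $h\ge 6/\varepsilon$ is comfortably met. Theorem~\ref{simulation_theorem} then yields $D^{cc}(f\circ g(AP_q,c))\ge (h-\log_2 n)/4\cdot D^{dt}(f)\ge (\log_2(q/n)-101)/4\cdot D^{dt}(f)$, which is strictly stronger than the stated bound. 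The parenthetical claim that each party receives $2\log_2 q$ bits is immediate, since a vertex of $AP_q$ is an element of $\mathbb{F}_q^2$.

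The ``hard part'' here is really just constant bookkeeping: choosing $\varepsilon$ so that both the window $h\ge 6/\varepsilon$ and the window $n\le 2^{h(1-\varepsilon)}$ open simultaneously, and verifying that the additive losses (the floor in the definition of $h$, and the slack between $101$ and the stated $200$) all work out in the right direction. Once Theorems~\ref{from_expanders_to_hitting} and~\ref{simulation_theorem} are in place, no genuinely new idea is required.
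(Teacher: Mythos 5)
Your proposal is correct and follows essentially the same route as the paper: verify the common-neighbour bound by solving the $2\times 2$ linear system over $\mathbb{F}_q$, invoke Theorem~\ref{from_expanders_to_hitting} with $(m,d,\gamma)=(q^2,q,1/\sqrt{q})$ to get $(1/10,\lfloor\log_2 q\rfloor-100)$-hitting distributions, and then apply Theorem~\ref{simulation_theorem} with $\varepsilon=1-\log_2(n)/(\lfloor\log_2 q\rfloor-100)$, which is exactly the paper's choice. The constant bookkeeping you carry out (yielding a slightly stronger bound than stated) matches the paper's derivation.
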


 We also give an example of a natural-looking gadget for which Corollary \ref{main_corollary} implies a simulation theorem. Our gadget is the following one: Alice gets $a\in\mathbb{F}_{q^2}$ and Bob gets $b\in\mathbb{F}_{q^2}$. Here $q$ is a power of an odd prime. Their goal is to output 1, if $a - b$ is a square in $\mathbb{F}_{q^2}$ (by that we mean that there is $c\in\mathbb{F}_{q^2}$ such that $a - b = c^2$), and 0 otherwise. Let us denote this gadget by $\SQR^q$.

Since $\mathbb{F}_{q^2}$ is a linear space over $\mathbb{F}_q$, we can naturally identify inputs to $\SQR^q$ with $\mathbb{F}_q^2$, i.e, $\SQR^q$ can be viewed as a function of the form $\SQR^q: \mathbb{F}_q^2\times \mathbb{F}_q^2 \to\{0, 1\}$.

\begin{proposition}
\label{square_proposition}
For all large enough $q$ the following holds. If $q$ is a power of an odd prime, then there exists a balanced covering $c$ of $AP_q$ such that $g(AP_q, c)$ is a sub-function of $\SQR^p$, i.e., whenever $g(AP_q, c)(a,b)$ is defined, we have $g(AP_q, c)(a, b) = \SQR^q(a, b)$. A truth table of $c$ can be computed in time $q^{O(1)}$. 
\end{proposition}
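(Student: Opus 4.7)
The plan is to identify $\mathbb{F}_q^2 \cong \mathbb{F}_{q^2}$ via a quadratic extension and exploit the fact that, under this identification, the $\SQR^q$-value of two inputs depends only on the ``slope'' (first) coordinate of their common neighbor in $AP_q$. Since $q$ is an odd prime power, I fix a non-square $r \in \mathbb{F}_q^*$ and an $\alpha \in \mathbb{F}_{q^2} \setminus \mathbb{F}_q$ with $\alpha^2 = r$, and use the basis $\{1, \alpha\}$ to identify $(a_1, a_2) \in \mathbb{F}_q^2$ with $a_1 + a_2 \alpha \in \mathbb{F}_{q^2}$.

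The central observation is that every element of $\mathbb{F}_q^*$ is a square in $\mathbb{F}_{q^2}$: the squares in $\mathbb{F}_{q^2}^*$ form the unique subgroup of order $(q-1)(q+1)/2$, which contains the unique subgroup $\mathbb{F}_q^*$ of order $q-1$ because $q$ is odd. A direct computation in $AP_q$ shows that two distinct vertices $u = (a_1, a_2)$ and $v = (b_1, b_2)$ with $a_1 \neq b_1$ have unique common neighbor $w = (s, t)$ with $s = (a_2 - b_2)/(a_1 - b_1)$. Under the identification, $u - v = (a_1 - b_1) + (a_2 - b_2)\alpha = (a_1 - b_1)(1 + s\alpha)$; the first factor lies in $\mathbb{F}_q^*$ and is already a square in $\mathbb{F}_{q^2}$, so $\SQR^q(u, v) = [\,1 + s\alpha \text{ is a square in } \mathbb{F}_{q^2}\,]$, depending only on $s$. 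I therefore define
$$c(s, t) := \bigl[\,1 + s\alpha \text{ is a square in } \mathbb{F}_{q^2}\,\bigr],$$
which is well-defined on all of $V = \mathbb{F}_q^2$, and obtain $g(AP_q, c)(u, v) = c(w) = \SQR^q(u, v)$ whenever the former is defined.

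For balance, use the norm $N : \mathbb{F}_{q^2}^* \to \mathbb{F}_q^*$, $N(x) = x \cdot x^q$: an element of $\mathbb{F}_{q^2}^*$ is a square iff $N(x)$ is a square in $\mathbb{F}_q^*$. Here $N(1 + s\alpha) = (1 + s\alpha)(1 - s\alpha) = 1 - r s^2$, which never vanishes since $r$ is a non-square. The standard character-sum identity $\sum_{s \in \mathbb{F}_q} \chi(1 - r s^2) = -\chi(-r)$ (with $\chi$ the quadratic character on $\mathbb{F}_q$; valid since the discriminant $4r$ is nonzero) has absolute value $1$, so the number of $s$ with $1 - r s^2$ a square equals $(q \pm 1)/2$. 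Hence $|c^{-1}(b)| = q \cdot (q \pm 1)/2 \ge q(q-1)/2 \ge q^2/3$ for all $q \ge 3$, and a truth table of $c$ is built by $q^2$ evaluations of Euler's criterion in $\mathbb{F}_{q^2}$, giving total runtime $q^{O(1)}$. I do not foresee a genuine obstacle: the substance of the argument is the single observation that squareness in $\mathbb{F}_{q^2}$ is invariant under $\mathbb{F}_q^*$-scaling, so $\SQR^q(u, v)$ depends only on the ``direction'' of $u - v$, which is precisely the first coordinate of the common neighbor in $AP_q$.
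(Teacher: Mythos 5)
Your proof is correct, and the central algebraic identity is the same as the paper's: define the coloring on a vertex $w=(s,t)$ of $AP_q$ by whether $1+sw$ is a square in $\mathbb{F}_{q^2}$ (for a fixed basis element $w$), and use the factorization $u-v=(a_1-b_1)(1+s\,w)$ together with the fact that $\mathbb{F}_q^*$ consists of squares in $\mathbb{F}_{q^2}$ to conclude that $\SQR^q$ depends only on the slope $s$ of the common neighbor. Where you diverge is in two auxiliary places. First, the paper takes an arbitrary $w\in\mathbb{F}_{q^2}\setminus\mathbb{F}_q$, while you specialize to $w=\alpha$ with $\alpha^2=r$ a non-square in $\mathbb{F}_q$; this buys you the closed form $N(1+s\alpha)=1-rs^2$ and the square criterion via the norm. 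Second, and more substantively, the balance of $c$ is argued quite differently: the paper introduces a uniformly random scalar $\lambda\in\mathbb{F}_q^*$ to show $\lambda(1+wa)$ is uniform on $\{i+wj:i\neq0\}$, and then uses a crude count (``at least $0.4q^2$ squares'' for large $q$) to get balance only for $q$ large enough. You instead compute the exact count via a quadratic character sum, obtaining $|c^{-1}(1)|=q(q\pm1)/2$, which gives balance for every $q\ge 3$. Your argument is tighter and more explicit, at the cost of invoking the character-sum identity $\sum_s\chi(1-rs^2)=-\chi(-r)$ and the norm-criterion for squares; the paper's argument is more elementary but loses constant factors and only works asymptotically. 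Both routes are valid and efficiently computable, so the proposition's statement (including the $q^{O(1)}$ construction of the truth table) is fully established either way.
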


This Proposition implies a simulation theorem for $\SQR^q$, with the same parameters as in Corollary \ref{main_corollary} and with polynomial-time listable underlying hitting distributions.

Next we observe that any spectral expander ``similar'' to $AP_q$ automatically satisfies restrictions of Theorem \ref{from_expanders_to_hitting}.

\begin{proposition}
\label{affine_like_proposition}
 Assume that $G = (V, E)$ is a $(m, d, \gamma)$-spectral expander and
$$2d + 4 > d^2\left(2\gamma^2 + \frac{4(1 - \gamma^2)}{m}\right).$$
Then for any two distinct vertices $u, v\in V$ it holds that $|\Gamma(u)\cap \Gamma(v)| \le 1$. 
\end{proposition}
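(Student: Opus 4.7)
My plan is a one-line spectral calculation applied to the vector $e_u+e_v$. Assume toward contradiction that distinct $u,v\in V$ satisfy $k:=|\Gamma(u)\cap\Gamma(v)|\ge 2$. If $A$ denotes the $\{0,1\}$ adjacency matrix of $G$, then $(A^2)_{uv}=|\Gamma(u)\cap\Gamma(v)|=k$, while $(A^2)_{uu}=(A^2)_{vv}=d$ by $d$-regularity (indeed $\sum_w A_{uw}^2=\sum_w A_{uw}=d$). Expanding the squared norm therefore gives
$$\|A(e_u+e_v)\|^2 \;=\; \|Ae_u\|^2+2\langle Ae_u,Ae_v\rangle+\|Ae_v\|^2 \;=\; 2d+2(A^2)_{uv} \;\ge\; 2d+4.$$

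For a matching upper bound I decompose $e_u+e_v=\tfrac{2}{m}\mathbf{1}+w$ with $w\perp\mathbf{1}$ and $\|w\|^2=2-4/m$. Since $A\mathbf{1}=d\mathbf{1}$, and since $A$ preserves $\mathbf{1}^{\perp}$ and acts there with spectral radius at most $d\gamma$ by assumption, we obtain an orthogonal decomposition $A(e_u+e_v)=\tfrac{2d}{m}\mathbf{1}+Aw$ and hence
$$\|A(e_u+e_v)\|^2 \;=\; \tfrac{4d^{2}}{m}+\|Aw\|^2 \;\le\; \tfrac{4d^{2}}{m}+d^{2}\gamma^{2}\!\left(2-\tfrac{4}{m}\right) \;=\; d^{2}\!\left(2\gamma^{2}+\tfrac{4(1-\gamma^{2})}{m}\right).$$
Combining the two displays contradicts the standing hypothesis $2d+4>d^{2}\bigl(2\gamma^{2}+4(1-\gamma^{2})/m\bigr)$, so no pair $u,v$ with $k\ge 2$ can exist, which is exactly the conclusion.

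The only genuinely non-routine choice is the test vector. The more familiar $e_u-e_v$ lies inside $\mathbf{1}^{\perp}$ and yields only the classical \emph{lower} bound $(A^2)_{uv}\ge d-d^2\gamma^2$, while working with $e_u$ or $e_v$ individually loses the cross term $2(A^2)_{uv}$ that drives everything. Using the sum $e_u+e_v$ is precisely what makes the $\mathbf{1}$-component of $A(e_u+e_v)$ contribute exactly $4d^{2}/m$ and the $\mathbf{1}^{\perp}$-component contribute exactly $2d^{2}\gamma^{2}(1-2/m)$, which reassemble into the right-hand side of the hypothesis on the nose. Given this choice, the rest is bookkeeping.
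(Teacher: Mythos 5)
Your proof is correct and follows essentially the same route as the paper: the same test vector $\mathbb{I}_{\{u,v\}}=e_u+e_v$, the same orthogonal decomposition $\frac{2}{m}\mathbf{1}+w$ giving the spectral upper bound $d^2\left(2\gamma^2+\frac{4(1-\gamma^2)}{m}\right)$, and a contradiction with a combinatorial lower bound of $2d+4$. The only real difference is in that lower bound: you read it off directly from the cross term, $\|A(e_u+e_v)\|^2=(A^2)_{uu}+(A^2)_{vv}+2(A^2)_{uv}\ge 2d+4$, whereas the paper argues about the at most $2d-2$ nonzero entries of $M\mathbb{I}_{\{u,v\}}$, positive integers summing to $2d$, and bounds $\sum_i\xi_i^2$; your version is shorter and gives the exact value $2d+2|\Gamma(u)\cap\Gamma(v)|$ for simple graphs. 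One small caveat: the paper's conventions allow parallel edges and self-loops, so you should state $(A^2)_{uu}\ge d$ and $(A^2)_{uv}\ge|\Gamma(u)\cap\Gamma(v)|$ rather than the equalities tied to a $\{0,1\}$ matrix; since both inequalities point the right way, the argument is unaffected.
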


In particular, all $(m^2, m, 1/\sqrt{m})$-spectral expanders satisfy these restrictions. However, Proposition \ref{affine_like_proposition} is by no means  a necessary condition. For example, Theorem \ref{from_expanders_to_hitting} can be also applied Lubotzky-Phillips-Sarnak  construction of Ramanujan graphs (\cite{lubotzky1988ramanujan}). More specifically, if $p, q$ are unequal primes, $p, q \equiv 1 \pmod{4}$ and $p$ is a quadratic residue modulo $q$, the paper \cite{lubotzky1988ramanujan} constructs an explicit graph $X^{p, q}$ which, in particular, is a  $(q(q^2 - 1)/2, p + 1, 2\sqrt{p}/(p + 1))$-spectral expander and in which the shortest cycle is of length at least $2\log_{p} q$. It can also be easily shown that provided $p < q^2$ there are no self-loops in $X^{p, q}$.  Thus if $p < \sqrt{q}$,  then  any two distinct vertices of $X^{p, q}$ have at most one common neighbor, while inequality from Proposition \ref{affine_like_proposition} is false for $X^{p, q}$.

We then obtain some results related to the following question: what is the best possible trade-off between the arity of outer functions and the size of gadget in deterministic simulation theorems? Once again, consider $\SQR^q$. Note that in $\SQR^q$ each party receives $k = 2\log_2 q$ bits. Corollary  \ref{main_corollary} lower bounds $D^{cc}(f\circ \SQR^q)$ whenever arity of $f$ is at most $2^{k/2 - O(1)}$. If the arity of $f$ is at most $2^{\left(1/2 - \Omega(1)\right) k}$, the lower bound becomes $\Omega(k \cdot D^{dt}(f))$. Thus $\SQR^q$ achieves the same trade-off between the arity of $f$ and the size of a gadget as $k$-bit Inner Product (while underlying  hitting distributions for $\SQR^q$, unlike Inner Product, are polynomial-time listable). 

Ramanujan graphs yield gadgets with much worse trade-off. Namely, if $p$ is of order $\sqrt{q}$ and $c$ is a balanced coloring of $X^{p, q}$, then $g(X^{p, q}, c)$ is a gadget on $k \approx 3\log_2 q$ bits which admits a simulation theorem for all outer functions of arity roughly $2^{\log_2 p} = 2^{k/6}$.

This raises the following question: for a given $k$ what is the maximal $h$ such that there is a gadget on $k$ bits having two $(\frac{1}{10}, h)$-hitting distributions, the one over $0$-monochromatic rectangles and the other over $1$-monochromatic rectangles? Above discussion shows that $h$ can be about $k/2$. In the following Proposition we observe that it is impossible to do better.

\begin{proposition}
\label{hitting_distributions_lower_bound} 
For every $g:\{0, 1\}^k \times\{0, 1\}^k\to\{0, 1\}$ and for every integer $h\ge 1$ there exists $b\in\{0, 1\}$ such that the following holds. For  every probability distribution $\mu$ over $b$-monochromatic rectangles of $g$ there are $X, Y\subset\{0, 1\}^k$ of size at least $2^{k - h}$ such that
$$\Pr_{R\sim \mu}[R \cap X\times Y \neq \varnothing] \le 2^{k - 2h + 1}.$$
\end{proposition}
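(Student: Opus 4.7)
My plan is to establish the bound via the probabilistic method. I would first choose $b \in \{0, 1\}$ to be the minority color, so that $m_b := |g^{-1}(b)| \leq 2^{2k-1}$; this exists by pigeonhole since $m_0 + m_1 = 2^{2k}$. Then I would sample $X$ and $Y$ independently and uniformly at random from the subsets of $\{0, 1\}^k$ of size exactly $2^{k-h}$, and bound the expected hit probability over this joint randomness.

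For any fixed $b$-monochromatic rectangle $R = U \times V$, by the independence of $X$ and $Y$ and the union bound applied to $U$ and $V$ respectively,
$$\Pr_{X, Y}[R \cap X \times Y \neq \varnothing] = \Pr_X[X \cap U \neq \varnothing] \cdot \Pr_Y[Y \cap V \neq \varnothing] \leq \frac{|U|}{2^{h}} \cdot \frac{|V|}{2^{h}} = \frac{|R|}{2^{2h}}.$$
Averaging over $R \sim \mu$ and swapping the order of expectations,
$$\E_{X, Y}\bigl[\Pr_{R \sim \mu}[R \cap X \times Y \neq \varnothing]\bigr] \leq \frac{\E_R[|R|]}{2^{2h}}.$$
By the probabilistic method this yields specific $X, Y$ of the required size achieving hit probability at most this expectation.

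The main obstacle is bounding $\E_R[|R|]$ sharply enough to reach $2^{k - 2h + 1}$. The trivial estimate $|R| \leq m_b \leq 2^{2k-1}$ only yields $2^{2k - 2h - 1}$, which exceeds the target by a factor of $2^{k-2}$. To close this gap I would exploit the area constraint $|U| \cdot |V| \leq m_b$ implied by $R \subseteq g^{-1}(b)$: a $b$-monochromatic rectangle cannot be simultaneously wide and tall, and in particular $\min(|U|, |V|) \leq \sqrt{m_b} \leq 2^{k - 1/2}$. Splitting $\mu$ into two sub-distributions according to which side is the small one, and using the sharper per-rectangle estimate $\min(|U|, |V|)/2^h$ whenever the larger side exceeds $2^h$, combined with a non-uniform choice of $X$ or $Y$ tailored to each class (so that the heaviest large rectangles in the support of $\mu$ can be avoided almost surely rather than merely down-weighted by area), is where I expect the additional factor of $2^{k}$ improvement to come from.
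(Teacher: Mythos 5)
There is a genuine gap, and it sits exactly where you defer the work. Choosing $b$ to be the minority colour is not justified, and --- more importantly --- no argument that bounds the hit probability of a fixed rectangle $U\times V$ against a randomly chosen pair $X,Y$ by a quantity of the form $|U|\,|V|/2^{2h}$ or $\min(|U|,|V|)/2^{h}$ can reach the stated bound, because the minority colour may itself contain a rectangle with both sides at least $2^{k-h}$. Take $g(x,y)=x_1\wedge y_1$: here $b=1$ is the minority colour, and for $\mu$ the point mass on $R^*=\{x:x_1=1\}\times\{y:y_1=1\}$ a uniformly random pair of $2^{k-h}$-element sets hits $R^*$ with probability $1-o(1)$, while your per-rectangle estimates give $2^{2k-2-2h}$, respectively $2^{k-1-h}$, both far above the target $2^{k-2h+1}<1$. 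The only way to win on such a $\mu$ is to place $X\times Y$ inside a large $0$-monochromatic rectangle, i.e.\ to choose $X,Y$ structurally rather than by any averaging over a product distribution. Your closing sketch (``split by which side is small, choose $X$ or $Y$ non-uniformly so that the heaviest large rectangles are avoided almost surely'') is precisely the unproved step: $\mu$ may put constant mass on each of several fat rectangles that must \emph{all} be avoided simultaneously by the single pair $X,Y$, and nothing in a measure or area computation guarantees that this is possible; the general reason it is possible is the existence of a fat rectangle of the \emph{opposite} colour, which your outline never invokes.

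The paper's proof replaces the minority-colour choice by exactly this dichotomy, and then needs no area bounds at all. If some $0$-monochromatic rectangle $A\times B$ has $|A|,|B|\ge 2^{k-h}$, take $b=1$ and $X=A$, $Y=B$: every $1$-monochromatic rectangle is disjoint from $A\times B$, so the hit probability is $0$ for every $\mu$. Otherwise every $0$-monochromatic rectangle has one side smaller than $s=2^{k-h}$; take $b=0$, let $X,Y$ be uniformly random $s$-element subsets, and bound the hit probability of each rectangle using the small side alone, estimating the other factor by $1$: it is at most $1-\binom{2^k-s}{s}/\binom{2^k}{s}\le s^2/(2^k-s)\le 2^{k-2h+1}$ (Lemma \ref{binomial_fraction}), and then an averaging argument fixes $X,Y$. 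Note that in this second case your product of two probabilities is unnecessary, and in the first case no randomized choice of $X,Y$ can work; so the missing case analysis is not a technicality but the substance of the proof. To salvage your outline you would have to show that your chosen $b$ admits no $2^{k-h}\times 2^{k-h}$ monochromatic rectangle of its own, and switch to the opposite colour when it does --- which is exactly the paper's argument.
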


In addition we show the following simple proposition, studying the minimal possible support size of hitting distributions.

\begin{proposition}
\label{size_proposition}
 For every $g:\{0, 1\}^k \times\{0, 1\}^k\to\{0, 1\}$ the following holds
\begin{itemize}
\item if there is $\left(\frac{1}{20}, h\right)$-hitting distribution over $b$-monochromatic rectangles of $g$ for some $b\in\{0, 1\}$, then there is $\left(\frac{1}{10}, h\right)$-hitting distribution over $b$-monochromatic rectangles of $g$ which support is of size $2^{O(k)}$.

\item Assume that for some $\delta <1$ and $h\in\mathbb{N}$ there are two $(\delta, h)$-hitting distributions $\mu_0, \mu_1$, where $\mu_b$ is over $b$-monochromatic rectangles of $g$. Then the support of $\mu_b$ is of size at least $2^h$, for every $b\in\{0, 1\}$.
\end{itemize}
\end{proposition}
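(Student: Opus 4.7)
The plan is to handle the two parts separately.

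For part one, I would construct the small-support distribution by random sampling combined with a concentration argument. Let $\mu$ be a $(1/20,h)$-hitting distribution over $b$-monochromatic rectangles; draw $N$ independent samples $R_1,\dots,R_N\sim\mu$ and let $\mu'$ be the uniform distribution on them. For any fixed pair $(X,Y)$ with $|X|,|Y|\ge 2^{k-h}$, the indicators $\mathbf{1}[R_i\cap X\times Y\ne\varnothing]$ are iid Bernoulli with mean at least $19/20$, so Hoeffding's inequality gives
\[\Pr\!\left[\Pr_{R\sim\mu'}[R\cap X\times Y\ne\varnothing]<9/10\right]\le e^{-\Omega(N)}.\]
By monotonicity of the hitting event in $(X,Y)$ it suffices to union-bound over pairs of size exactly $2^{k-h}$, of which there are at most $\binom{2^k}{2^{k-h}}^2 \le 2^{O(2^k)}$. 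Taking $N=\Theta(2^k)=2^{O(k)}$ makes the union bound succeed, so some realisation of the sample yields a $(1/10,h)$-hitting distribution of support size $2^{O(k)}$.

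For part two, I would argue by contradiction. Suppose $|\mathrm{supp}(\mu_b)|\le 2^h-1$ for some $b$. First observe that since $\mu_{1-b}$ is also $(\delta,h)$-hitting with $\delta<1$, and a $b$-monochromatic rectangle is disjoint from every $(1-b)$-monochromatic one, each $R=X_R\times Y_R\in\mathrm{supp}(\mu_b)$ must satisfy $\min(|X_R|,|Y_R|)<2^{k-h}$; otherwise $X_R\times Y_R$ itself would be a pair that $\mu_{1-b}$ hits with probability $0$. Partition $\mathrm{supp}(\mu_b)=A\sqcup B$ by placing in $A$ all $R$ with $|X_R|<2^{k-h}$ (so every $R\in B$ has $|Y_R|<2^{k-h}$), and set
\[X=\{0,1\}^k\setminus\bigcup_{R\in A}X_R,\qquad Y=\{0,1\}^k\setminus\bigcup_{R\in B}Y_R.\]
Then $\bigl|\bigcup_{R\in A}X_R\bigr|\le |A|(2^{k-h}-1)\le(2^h-1)(2^{k-h}-1)<2^k-2^{k-h}$, which forces $|X|>2^{k-h}$; symmetrically $|Y|>2^{k-h}$. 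By construction every rectangle in $\mathrm{supp}(\mu_b)$ misses $X\times Y$, so the hitting probability is $0$, contradicting the $(\delta,h)$-hitting property of $\mu_b$.

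I expect the main quantitative subtlety to be in part one: the number of ``large'' pairs $(X,Y)$ is doubly exponential in $k$, so one has to observe that the Chernoff decay $e^{-\Omega(N)}$ already beats this count once $N$ is merely singly exponential in $k$. Part two is purely combinatorial and its only conceptual step is the short-side property of support rectangles, which is where the hypothesis that both $\mu_0$ and $\mu_1$ are hitting enters.
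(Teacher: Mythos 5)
Your proof is correct and follows essentially the same approach as the paper: part one uses random sampling of $\Theta(2^k)$ rectangles, Chernoff/Hoeffding concentration, and a union bound over the at most $2^{2\cdot 2^k}$ pairs of large sets; part two uses the observation that the hitting property of $\mu_{1-b}$ forces every support rectangle of $\mu_b$ to have a short side of length $<2^{k-h}$, and then removes the union of those short sides to get a large $X\times Y$ missed by $\mu_b$. The only cosmetic difference is that you partition the support into $A\sqcup B$ before taking unions, whereas the paper uses two (possibly overlapping) index sets, but the counting is the same.
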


So it is impossible to improve a trade-off between the size of outer functions and the size of gadgets simply by improving hitting distributions. However,  until now we only spoke about improving gadgets.  What about outer functions? What causes a restriction on the arity of $f$ in Theorem \ref{simulation_theorem}? It can be verified that the only place in which arity of $f$ appears in the proof is so-called Thickness Lemma. Let us state this Lemma.

Assume that $A$ is a finite set and  $X$ is a subset of $A^n$. Here $n$ corresponds to the arity of $f$.  Let $X_{[n]/\{i\}}$ denote the projection of $X$ onto all the coordinates except the $i$-th one.
Define the following auxiliary bipartite graph $G_i(X)$.  Left side vertices of $G_i(X)$ are taken from $A$,  right side vertices of $G_i(X)$ are taken from $X_{[n]/\{i\}}$. We connect $a\in A$ with $(x_1, \ldots, x_{i - 1}, x_{i + 1}, \ldots x_n) \in X_{[n]/\{i\}}$ if and only if
$$(x_1, \ldots, x_{i - 1}, a, x_{i + 1}, \ldots x_n)\in X.$$ 

Clearly, there are $|X|$ edges in $G_i(X)$.

 Let $MinDeg_i(X)$ denote the minimal possible degree of a right side vertex of $G_i(X)$. Similarly, let $AvgDeg_i(X)$ denote the average degree of a right side vertex of $G_i(X)$. There are $|X|$ edges and $|X_{[n]/\{i\}}|$ right side vertices, hence it is naturally to define $AvgDeg_i(X)$ as
$$AvgDeg_i(X) = \frac{|X|}{|X_{[n]/\{i\}}|}.$$

Thickness Lemma relates this two measures. Namely, it states that if for every $i$  \emph{average} degree of $G_i(X)$ is big, then there is a large subset $X^\prime\subset X$ such that for every $i$ \emph{minimal} degree of $G_i(X^\prime)$ is big. The precise bounds can be found in the following
\begin{lemma}[\cite{raz1997separation}]
\label{thickness_lemma}
Consider any $\delta \in (0, 1)$. Assume that for every $i\in\{1, 2, \ldots, n\}$ we have that $AvgDeg_i(X) \ge d$. Then there is $X^\prime\subset X$ of size at least $(1 - \delta)|X|$ suh that for every $i\in[n]$ it holds that $MinDeg_i(X^\prime) \ge \frac{\delta d}{n}$.
\end{lemma}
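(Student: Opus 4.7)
The plan is to construct $X'$ from $X$ by an iterative peeling procedure. Set $X^{(0)} = X$, and at each step $t$ check whether there exist $i\in\{1, \ldots, n\}$ and $y\in X^{(t)}_{[n]/\{i\}}$ whose degree in $G_i(X^{(t)})$ is strictly less than $\delta d / n$. If so, pick such a pair and obtain $X^{(t+1)}$ from $X^{(t)}$ by deleting every point whose projection onto $[n]/\{i\}$ equals $y$; otherwise halt. Each step strictly decreases $|X^{(t)}|$, so the procedure terminates at some $X^\ast\subset X$ for which $MinDeg_i(X^\ast) \ge \delta d / n$ holds simultaneously for every $i$, by the stopping condition. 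The only thing left is to lower-bound $|X^\ast|$ in terms of $|X|$.

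To control the loss, I would attribute each peeling step to the coordinate $i$ it used and count per coordinate. A step attributed to $i$ deletes strictly fewer than $\delta d / n$ points, namely the $\deg(y)$ points whose projection onto $[n]/\{i\}$ is the chosen $y$. Moreover, once $y$ has disappeared from $X^{(t)}_{[n]/\{i\}}$ it cannot reappear in $X^{(s)}_{[n]/\{i\}}$ for any $s>t$, since deletions can only shrink projections. Hence the number of steps attributed to coordinate $i$ is at most the initial size of the right side of $G_i(X)$, which is $|X_{[n]/\{i\}}|$. Rearranging $AvgDeg_i(X) = |X|/|X_{[n]/\{i\}}| \ge d$ gives $|X_{[n]/\{i\}}| \le |X|/d$, so the total number of deletions attributed to $i$ is strictly less than $(|X|/d)\cdot(\delta d / n) = \delta |X|/n$. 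Summing over $i\in\{1,\ldots,n\}$ yields $|X\setminus X^\ast| \le \delta |X|$, and $X^\ast$ serves as the desired $X'$.

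The subtle point I expect to require the most care is the monotonicity claim that a right-side vertex of $G_i$ never reappears in $X^{(s)}_{[n]/\{i\}}$ after having been removed; this underlies the per-coordinate count of peeling steps. A related observation is that a single step chosen on behalf of coordinate $i$ may simultaneously cause right-side vertices of $G_j$ to vanish ``for free'' for $j\neq i$, but these cascade eliminations are never charged their own peeling step and thus do not appear in the accounting. Since the argument only upper-bounds the total number of deletions, such cascades can only help, so no special treatment is needed; this is the main conceptual pitfall in writing up the proof carefully.
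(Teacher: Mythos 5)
The paper does not prove Lemma \ref{thickness_lemma} at all: it is quoted with a citation to Raz and McKenzie, so there is no in-paper argument to compare against. Your iterative peeling proof is correct and is essentially the standard proof from that literature: the per-coordinate charging (at most $|X_{[n]\setminus\{i\}}|\le |X|/d$ steps for coordinate $i$, each deleting fewer than $\delta d/n$ points, hence fewer than $\delta|X|/n$ deletions charged to $i$ and fewer than $\delta|X|$ in total) is exactly the right accounting, and your observations about non-reappearing right vertices and uncharged cascade removals close the only delicate points.
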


One possible way to improve a trade-off between the arity of $f$ and the size of gadget is to improve Thickness Lemma.  For example, if we could replace $\frac{\delta d}{n}$ with $\frac{\delta d}{\sqrt{n}}$ in Lemma \ref{thickness_lemma} , this would mean that $k$-bit Inner Product and $k$-bit $\SQR$-gadget admit simulation theorems for all outer functions of arity roughly $2^{k}$ ( rather than $2^{k/2}$).

However, such an improvement is impossible  and the bounds given in Lemma \ref{thickness_lemma} are near-optimal. Note that Thickness Lemma says nothing about  whether there even  exists a \emph{non-empty} subset $X^\prime \subset X$ such that for all $i\in[n]$ it holds that $MinDeg_i(X^\prime)$ is larger, say, by a constant than $\frac{d}{n}$. And indeed, we show that for some $X$ there is no such $X^\prime$ at all. More precisely, we show the following

\begin{theorem}
\label{thickness_lemma_lower_bound}
For every $\varepsilon > 0$ and for all $n\ge 2, s\ge 1$ there exists $m$ and a non-empty set $X\subset\{0, 1, \ldots, m - 1\}^n$ such that 
\begin{itemize}
\item for all $i\in[n]$ it holds that $AvgDeg_i(X) \ge s(n - \varepsilon)$;
\item there is no non-empty $Y\subset X$ such that for all  $i\in[n]$ it holds that $MinDeg_i(Y) \ge s + 1$.
\end{itemize}
\end{theorem}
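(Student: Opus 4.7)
The plan is to construct $X$ explicitly as a union of $n$ ``thin slabs''. Fix $n \ge 2$, $s \ge 1$, $\varepsilon > 0$, and take $m$ to be a large integer of order $n^{2} s / \varepsilon$, made precise below. Set
\[
X \;=\; \bigcup_{i=1}^{n} S_i, \qquad S_i := \{(x_1,\ldots,x_n) \in \{0,1,\ldots,m-1\}^n : x_i < s\}.
\]
I will verify the two required conclusions in turn.

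For the average-degree bound, every $(n-1)$-tuple in $\{0,\ldots,m-1\}^{n-1}$ extends to a point of $S_i \subseteq X$ by inserting $0$ in position $i$, so $|X_{[n]/\{i\}}| = m^{n-1}$. Inclusion-exclusion with the Bonferroni truncation yields
\[
|X| \;\ge\; n s m^{n-1} - \binom{n}{2} s^{2} m^{n-2},
\]
whence $AvgDeg_i(X) \ge n s - \tfrac{n(n-1) s^{2}}{2 m}$. Choosing $m \ge \tfrac{n(n-1) s}{2 \varepsilon}$ makes this at least $s(n-\varepsilon)$; this is the only place the parameter $\varepsilon$ actually enters.

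For the degeneracy conclusion I use a potential-function argument. Assign to each $x \in X$ the set $I(x) := \{i : x_i < s\}$, which is non-empty by definition of $X$. Given any non-empty $Y \subseteq X$, pick $x = (x_1,\ldots,x_n) \in Y$ with $|I(x)|$ minimum and fix some $i_0 \in I(x)$. For $a \in \{0,\ldots,m-1\}$ let $y(a)$ be the tuple obtained from $x$ by replacing coordinate $i_0$ with $a$. A direct check gives $|I(y(a))| = |I(x)| - 1 + \mathbf{1}[a < s]$; hence if $y(a) \in Y$, then minimality of $|I(x)|$ on $Y$ forces $a < s$. Therefore the fiber of $Y$ in direction $i_0$ through the projection $(x_1,\ldots,x_{i_0-1},x_{i_0+1},\ldots,x_n)$ is contained in $\{0,\ldots,s-1\}$, has at most $s$ elements, and so $MinDeg_{i_0}(Y) \le s < s+1$. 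Consequently $Y$ fails the requirement $MinDeg_i(Y) \ge s+1$ for all $i$, as needed.

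The main obstacle is identifying the right potential $I(\cdot)$: once one notices that replacing the $i_0$-th coordinate of $x$ — which lies in the ``thin'' range $\{0,\ldots,s-1\}$ — by anything outside that range strictly drops $|I|$ by exactly one, everything else is routine bookkeeping. I do not anticipate any further technical difficulty.
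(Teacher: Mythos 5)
Your proof is correct, and it takes a genuinely different route from the paper's. The paper builds a \emph{reducible} set $X_n \subset \{0,\ldots,m-1\}^n$ by a somewhat delicate two-part recursion (diagonal plus shifted diagonal for $n=2$, then a product-like step), proves reducibility and the $AvgDeg$ bound by induction, and then applies an \emph{inflation} operator $In(\cdot,s)$ (replacing each coordinate $x_i$ by $sx_i + r_i$) together with a separate inductive lemma showing that inflating a reducible set can only push degrees up to $s$, not past it. Your construction short-circuits all of that: the ``union of thin slabs'' $X = \{x : \min_i x_i < s\}$ is explicit, non-recursive, and absorbs the role of both the base reducible set and the inflation in one step, with the parameter $s$ built directly into the slab width. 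The inclusion-exclusion count and the choice $m \ge n(n-1)s/(2\varepsilon)$ handle the average-degree side cleanly. Your degeneracy argument via the potential $|I(x)|$ (number of small coordinates), choosing a minimizer and showing the fiber in a small coordinate is pinned to $\{0,\ldots,s-1\}$, is a one-paragraph replacement for the paper's two inductions and is, to my eye, more transparent. Both proofs establish the same quantitative conclusion ($AvgDeg_i(X)$ arbitrarily close to $ns$, yet every non-empty subset has some $MinDeg_i \le s$), so nothing is lost; what your version buys is brevity and an explicit, easy-to-visualize witness set.
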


Finally, we study hitting distributions for Disjointness gadget. More specifically, let $\DISJ^m$ be communication problem in which Alice receives $a\subset\{1, 2, \ldots, m\}$, Bob receives $b\subset\{1, 2, \ldots, m\}$ and the goal is to output 1, if $a\cap b = \varnothing$, and $0$ otherwise. Let $\DISJ^m_k$ be a restriction of $\DISJ^m$ to  $k$-element subsets of $\{1, 2, \ldots, m\}$. We show the following Propositions:

\begin{proposition}
\label{disj_0}
For all large enough $m$ the following holds. Assume that $k < 0.99 m$. Then $\DISJ^m_k$ has a $\left(\frac{1}{10}, \Omega(k)\right)$-hitting distribution over $0$-monochromatic rectangles.
\end{proposition}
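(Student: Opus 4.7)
The plan is to use the most natural family of $0$-monochromatic rectangles. For each $i \in \{1, \ldots, m\}$ set $U_i = \{a : |a| = k,\, i \in a\}$ and $V_i = \{b : |b| = k,\, i \in b\}$; since $i \in a \cap b$ for every $(a, b) \in U_i \times V_i$, each $U_i \times V_i$ is $0$-monochromatic. I would take $\mu$ to be the uniform distribution over $\{U_1 \times V_1, \ldots, U_m \times V_m\}$.

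Given $X, Y$ with $|X|, |Y| \geq 2^{-h}\binom{m}{k}$, let $A = \bigcup_{a \in X} a$ and $B = \bigcup_{b \in Y} b$. A rectangle $U_i \times V_i$ meets $X \times Y$ exactly when $i \in A \cap B$, so
$$\Pr_{R \sim \mu}[R \cap X \times Y \neq \emptyset] = \frac{|A \cap B|}{m} \geq \frac{|A| + |B|}{m} - 1,$$
and it suffices to show that $|A|$ and $|B|$ are close to $m$.

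The key estimate is this: if $|A| = m - t$, every $a \in X$ lies inside the fixed $(m - t)$-set $A$, so $|X| \leq \binom{m-t}{k}$. Combining with $|X| \geq 2^{-h}\binom{m}{k}$ and the standard chain
$$\frac{\binom{m-t}{k}}{\binom{m}{k}} = \frac{\binom{m-k}{t}}{\binom{m}{t}} = \prod_{j=0}^{t-1} \frac{m-k-j}{m-j} \leq \left(\frac{m-k}{m}\right)^t,$$
I get $t \leq h / \log_2(m/(m-k))$, and the same bound holds for $m - |B|$. Using $\log_2(m/(m-k)) = -\log_2(1 - k/m) \geq (k/m)/\ln 2$, I conclude $m - |A|, m - |B| \leq h m \ln 2 / k$.

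Plugging back gives $|A \cap B|/m \geq 1 - 2 h \ln 2 / k$, and choosing $h = \lfloor k/(20 \ln 2) \rfloor = \Omega(k)$ makes the right-hand side at least $9/10$, which is the hitting property required. The hypothesis $k < 0.99 m$ together with $m$ large enough only serves to keep this $h$ a positive integer (and guarantee $m - k$ is large enough that the covering estimate is not vacuous). The main obstacle is really the covering estimate above; once a set $X$ of density $2^{-h}$ is seen to cover all but $O(h m/k)$ coordinates of $\{1, \ldots, m\}$, the rest is elementary.
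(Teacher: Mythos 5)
Your proof is correct and follows essentially the same approach as the paper: the hitting distribution is uniform over the rectangles $\{U_i \times U_i\}_{i\in[m]}$ where $U_i$ is the family of $k$-sets containing $i$, and the key observation in both is that any $X$ of density $2^{-\Omega(k)}$ in $\binom{[m]}{k}$ must cover almost all of $[m]$. The only difference is cosmetic — you derive the covering estimate directly from $\binom{m-t}{k}/\binom{m}{k}\le(1-k/m)^t$, while the paper invokes its Lemma~\ref{second_binomial_lemma} to get $|X|\ge\binom{0.99m}{k}$ and concludes that the union covers a $0.99$-fraction of $[m]$.
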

\begin{proposition}
\label{disj_1}
  Assume that $k < m^{1/3}$. Then for all $m$ large enough $\DISJ^m_k$ has a $\left(\frac{1}{10},\, \Omega(\log m) \right)$-hitting distribution over 1-monochromatic rectangles.
\end{proposition}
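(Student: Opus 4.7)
The plan is to take $\mu_1$ to be the distribution over rectangles $\binom{A}{k}\times\binom{B}{k}$, where $(A,B)$ is a uniformly random ordered bipartition of $[m]:=\{1,\ldots,m\}$ into equal halves of size $\lfloor m/2\rfloor$ (assume $m$ even for clarity; the odd case only changes constants). Each such rectangle is automatically $1$-monochromatic for $\DISJ^m_k$, because any $a\in\binom{A}{k}$ and $b\in\binom{B}{k}$ are disjoint. Given $X,Y\subset\binom{[m]}{k}$ with $|X|,|Y|\ge 2^{-h}\binom{m}{k}$, and since $B$ is marginally a uniform $(m/2)$-subset, a union bound reduces the claim to
\[
\Pr_A\!\left[X\cap\binom{A}{k}=\varnothing\right]\le\frac{1}{20}
\]
for arbitrary $X$ of density at least $2^{-h}$, which I attack by Chebyshev applied to $Z:=\bigl|X\cap\binom{A}{k}\bigr|$.

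For $k\le m^{1/3}$ each factor $(m/2-i)/(m-i)$ with $i\le 2k$ differs from $1/2$ by only $O(k/m)$, so $\Pr[x\subset A]=2^{-k}(1+O(k^2/m))$ for a single $k$-set and $\Pr[x\cup y\subset A]=2^{-(2k-t)}(1+O(k^2/m))$ for a pair with $|x\cap y|=t$. Consequently $\E[Z]=(1-o(1))|X|\cdot 2^{-k}\ge(1-o(1))\cdot 2^{-h}\binom{m/2}{k}$, which diverges whenever $h\le c\log_2 m$ with $c<2/3$, and
\[
\E[Z^2]-\E[Z]^2\le O(\E[Z])+2^{-2k}\sum_{\substack{x,y\in X\\ x\cap y\ne\varnothing}}2^{|x\cap y|}.
\]
The main estimate I need is the uniform bound
\[
\sum_{y\in\binom{[m]}{k},\ y\cap x\ne\varnothing}2^{|x\cap y|}=O(k^2/m)\cdot\binom{m}{k},
\]
valid for every fixed $k$-set $x$. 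This reduces to the hypergeometric moment generating function identity $\E_y\bigl[2^{|x\cap y|}\bigr]=1+O(k^2/m)$, once one subtracts the dominant $t=0$ mass $\binom{m-k}{k}/\binom{m}{k}=1-O(k^2/m)$.

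Summing this estimate over $x\in X$ and plugging back, Chebyshev yields
\[
\Pr[Z=0]\le\frac{1}{\E[Z]}+O(k^2/m)\cdot 2^h,
\]
so for $h=\lfloor(1/3-\varepsilon)\log_2 m\rfloor$ with any small $\varepsilon>0$ both summands are $o(1)$, and the $1/20$ bound follows for all sufficiently large $m$. Union-bounding with the symmetric event for $Y$ gives the required $1/10$-hitting property, and this choice of $h$ is $\Omega(\log m)$, as desired. The main technical obstacle I anticipate is a clean, non-asymptotic derivation of $\E_y\bigl[2^{|x\cap y|}\bigr]=1+O(k^2/m)$: since $|x\cap y|$ is hypergeometric rather than a sum of independent Bernoullis, one must either work with the explicit coefficient $[z^k](1+z)^{m-k}(1+2z)^k$ and bound the $t\ge 1$ terms $\binom{k}{t}\binom{m-k}{k-t}2^t$ directly, or invoke negative association to reduce to the independent case; either way, this is precisely where the hypothesis $k<m^{1/3}$ is essential, since it ensures $k^2/m=o(1)$ and thereby controls both the leading correction and the tail.
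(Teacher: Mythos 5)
Your distribution $\mu_1$ is exactly the paper's: pick a uniform bipartition $A\cup B=[m]$ into halves and output $\binom{A}{k}\times\binom{B}{k}$, reducing by a union bound to showing $\Pr\bigl[X\cap\binom{A}{k}=\varnothing\bigr]\le 1/20$ for any $X$ of density $\ge 2^{-h}$. Where you diverge is the analysis of that probability. The paper takes $A$ to be generated sequentially as $\{J_1,\dots,J_{m/2}\}$, carves out $t=\lceil m^{1/7}\rceil$ disjoint $k$-blocks $S_1,\dots,S_t\subset A$, and bounds $\Pr[\text{no }S_i\in X]$ by comparing $(S_1,\dots,S_t)$ in statistical distance to $t$ i.i.d.\ uniform $k$-sets $(R_1,\dots,R_t)$ (the distance is $\Pr[\lnot\text{pairwise disjoint}]\le k^2t^2/(m-k)$ via the paper's Lemma~7), then finishes with a coupon-collector estimate $(1-2^{-h})^t\le\exp(-2^{-h}t)$. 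You instead run a second-moment argument on $Z=\bigl|X\cap\binom{A}{k}\bigr|$: $\E[Z]\ge(1-o(1))2^{-k}|X|$, and the off-diagonal variance is controlled by the hypergeometric MGF bound $\E_y[2^{|x\cap y|}]=1+O(k^2/m)$, which holds since $\Pr[|x\cap y|\ge t]\le(k^2/(m-k))^t$ (again essentially Lemma~7 plus a geometric tail). Both approaches are correct; yours is arguably more self-contained (no statistical-distance detour, no choice of an auxiliary parameter $t$) and, as you note, actually yields a better explicit constant, namely $h$ up to roughly $(1/3-\varepsilon)\log_2 m$ versus the paper's $\lceil(\log_2 m)/8\rceil$. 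Two small inaccuracies that do not affect the conclusion: the bound $1/\E[Z]$ should be $O(1/\E[Z])$ (Chebyshev gives $\Var[Z]/\E[Z]^2$), and the ``$c<2/3$'' threshold for divergence of $\E[Z]$ is needlessly pessimistic (for $k\ge1$, $\binom{m/2}{k}\ge m/2$, so any $h\le(1-\varepsilon)\log_2 m$ works); neither matters since the binding constraint is $2^h\cdot k^2/m=o(1)$.
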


 In particular, these two propositions imply the following simulation theorem for $\DISJ^m_{\log_2 m}$:

\begin{corollary}
\label{disjointness_corollary}
There exists a constant $c$ such that for all $n\le m^c$ and for all $f:\{0, 1\}^n \to \{0, 1\}$ it holds that
$$D^{cc}(f \circ \DISJ^m_{\log_2 m}) = \Omega(D^{dt}(f) \log m).$$
\end{corollary}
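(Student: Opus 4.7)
The plan is to combine Propositions \ref{disj_0} and \ref{disj_1} with the black-box simulation result (Theorem \ref{simulation_theorem}). Set $k = \log_2 m$. For all sufficiently large $m$ we have $k < 0.99m$, so Proposition \ref{disj_0} gives a $(\tfrac{1}{10}, h_0)$-hitting distribution over $0$-monochromatic rectangles of $\DISJ^m_k$ with $h_0 = \Omega(k) = \Omega(\log m)$. Similarly $k = \log_2 m < m^{1/3}$ for all $m$ large enough, so Proposition \ref{disj_1} yields a $(\tfrac{1}{10}, h_1)$-hitting distribution over $1$-monochromatic rectangles with $h_1 = \Omega(\log m)$.

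Next I set $h = \min(h_0, h_1) = \Omega(\log m)$. A $(\tfrac{1}{10}, h')$-hitting distribution is also $(\tfrac{1}{10}, h)$-hitting whenever $h \le h'$ (decreasing $h$ only enlarges the family of pairs $(X,Y)$ required to be hit), so both monochromatic sides now admit $(\tfrac{1}{10}, h)$-hitting distributions with the same $h$.

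I then fix a small constant $\varepsilon \in (0,1)$, say $\varepsilon = 1/2$. For all large enough $m$ the bound $h \ge 6/\varepsilon$ from Theorem \ref{simulation_theorem} holds, so that theorem applies to $g = \DISJ^m_k$. Taking $c = (1-\varepsilon)\cdot (h/\log_2 m) / 2$, which is a positive constant by the lower bound on $h$, for every $n \le m^c \le 2^{h(1-\varepsilon)}$ and every $f:\{0,1\}^n \to \{0,1\}$ we get
\[
D^{cc}(f \circ \DISJ^m_k) \ge \frac{\varepsilon h}{4} \cdot D^{dt}(f) = \Omega(\log m)\cdot D^{dt}(f),
\]
which is exactly the statement of the corollary.

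There is no real obstacle here beyond bookkeeping: the only subtlety is matching the two values of $h$ coming from the two propositions and checking the quantitative hypotheses of Theorem \ref{simulation_theorem}, which is why the constant $c$ must be chosen strictly smaller than the implied constant inside $h = \Omega(\log m)$.
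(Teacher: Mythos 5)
Your proof is correct and follows the same implicit route the paper intends: combine Propositions \ref{disj_0} and \ref{disj_1} with Theorem \ref{simulation_theorem} after reconciling the two values of $h$ by taking the minimum. The one cosmetic slip is that your formula $c = (1-\varepsilon)(h/\log_2 m)/2$ depends on $m$; to literally obtain a fixed constant $c$, replace $h/\log_2 m$ by the constant $\alpha > 0$ implied by $h = \Omega(\log m)$, which is exactly the fix you already hint at in your final remark.
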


On the other hand, it is known that $D^{cc}(\DISJ^m_{\log_2 m}) = \Omega(\log^2(m))$. This leaves a possibility that $\Omega(\log m)$-factor in the last corollary can be improved. 

\subsection{Organization of the paper}

The rest of the paper is organized as follows.

In Section 2 we give Preliminaries.
In Section 3 we prove  Theorem \ref{from_expanders_to_hitting} and derive Corollary \ref{main_corollary}. In Section 4 we prove Proposition \ref{square_proposition}. 
In Section 5 we prove Theorem  \ref{thickness_lemma_lower_bound}.
In Section 6 we prove Proposition \ref{affine_like_proposition}.
In Section 7 we prove Proposition \ref{hitting_distributions_lower_bound}.
In Section 8 we prove Proposition \ref{size_proposition}.
In Section 9 we prove Propositions \ref{disj_0} and \ref{disj_1}.

\section{Preliminaries}
\hspace{\parindent}\textbf{Sets notations.} Let $[n]$ be the set $\{1, 2, \ldots, n\}$. Let $2^{[n]}$ denote the set of all subsets of $[n]$. Define $\binom{[n]}{k} = \{ s\in 2^{[n]} : |s| = k\}$. 

Assume that $A$ is a finite set, $X$ is a subset of $A^n$ and $S = \{i_1, \ldots, i_k\}$, where $i_1 < i_2 < \ldots < i_k$, is a subset of $[n]$. Let $X_S$ denote the following set:
$$X_S = \{(x_{i_1}, \ldots, x_{i_k}) : (x_1, \ldots, x_n) \in X\} \subset A^{|S|}.$$

Given $X\subset A^n$ and $i\in [n]$, consider the following bipartite graph $G_i(X) = (A, X_{[n]\setminus \{i\}}, E)$, where 
$$ E = \left\{(x_i, (x_1, \ldots, x_{i - 1}, x_{i + 1}, \ldots, x_n)) : (x_1, \ldots, x_n) \in X\right\}.$$

Vertices of $G_i(X)$ which are from $A$ will be called left vertices. Similarly, vertices of $G_i(X)$ which are from $X_{[n]\setminus \{i\}}$ will be called right vertices.

Define $MinDeg_i(X)$ as minimal $d$ such that there is a right vertex of $G_i(X)$ with degree $d$. Define $AvgDeg_i(X) = |X|/|X_{[n]\setminus\{i\}}|$. 

\medskip
\textbf{Communication and query complexity.} For introduction in both query and communication complexities see, e.g., \cite{kushilevitz2006communication}. We will use the following notation.

 For a Boolean function $f:\{0,1\}^n\to\{0, 1\}$ let $D^{dt}(f)$ denote $f$'s deterministic query complexity, i. e., minimal $d$ such that there is a deterministic decision tree of depth $d$ computing $f$.  For a (possibly partial) Boolean function  $g:A\times B \to \{0, 1\}$, where $A, B$ are some finite sets, let $D^{cc}(g)$ denote $g$'s deterministic communication complexity, i. e., minimal $d$ such that there is a deterministic communication protocol of depth $d$, computing $g$. Let us stress that in the case when $g$ is partial \ by  ``deterministic communication protocol computes $g$'' we mean only that a protocol outputs 0 on $(a, b)$ whenever $g(a, b) = 0$ and outputs 1 on $(a, b)$ whenever $g(a, b) = 1$; on inputs on which $g$ is not defined the protocol may output anything. 

If $f, g$ are as above, let $f\circ g$ denote the following (possibly partial) function:
\begin{align*}
f\circ g:& A^n \times B^n \to \{0, 1\},\\
 (f\circ g)&((a_1, \ldots, a_n), (b_1, \ldots, b_n)) = f(g(a_1, b_1), \ldots, g(a_n, b_n)).
\end{align*}

We can also measure $D^{cc}(f\circ g)$, deterministic communication complexity of $f\circ g$, assuming that Alice's input is $(a_1, \ldots, a_n) \in A^n$ and Bob's input is $(b_1, \ldots, b_n)\in B^n$.

\medskip
\textbf{Hitting distributions}. Fix a (possibly partial) Boolean function $g:A\times B \to \{0, 1\}$. A set $R\subset A\times B$ is called \emph{rectangle} if there are $U\subset A, V\subset B$ such that $R = U\times V$. If $b\in\{0, 1\}$, then we say that rectangle $R$ is $b$-monochromatic for $g$ if $g(a, b) = b$ whenever $(a, b)\in R$. We stress that if $g$ is partial, then in the definition of $b$-monochromatic rectangle we require that $g$ is everywhere defined on $R$.

Let $\delta$ be positive real and $h$ be positive integer. A probability distribution $\mu$ over rectangles $R\subset A\times B$ is called $(\delta, h)$-hitting if for all $X\subset A, Y\subset B$ such that $|X| \ge 2^{-h}|A|, |Y| \ge 2^{-h}|B|$ it holds that
$$\Pr_{R\sim\mu}[R\cap X\times Y \neq \varnothing] \ge 1 - \delta.$$ 
In this paper we are focused only on those $\mu$ such that there exists $b\in\{0, 1\}$ for which all rectangles from the support of $\mu$ are $b$-monochromatic for $g$. In this case we simply say that $\mu$ is over $b$-monochromatic rectangles of $g$. 

Let $g_t:\{0, 1\}^{k_t}\times\{0, 1\}^{k_t} \to\{0, 1\}$ be family of gadgets and $\mu_t$ be family of probability distributions, where $\mu_t$ is over rectangles of $g_t$. We call $\mu_t$ polynomial-time listable if the following holds:

\begin{itemize}
\item  the size of the support of $\mu_t$ is $2^{O(k_t)}$;

\item  all the probabilities of  $\mu_t$ are rational;

\item  there is a deterministic Turing machine which, having $k_t$  on input, in time $2^{O(k_t)}$ computes  $g_t$'s matrix and lists  all the rectangles from the support of $\mu_t$, together with probabilities  $\mu_t$ assigns to them.
\end{itemize}

\medskip
\textbf{Functions of interest.} 
%
Consider a finite field of size $q$, denoted below by $\mathbb{F}_q$. We call $a\in \mathbb{F}_q$ a \emph{square} if there is $b\in\mathbb{F}_q$ such that $a = b^2$ in $\mathbb{F}_q$. Let $\SQR^q$ denote the following Boolean function:
$$\SQR^q : \mathbb{F}_{q^2} \times \mathbb{F}_{q^2} \to\{0, 1\},\qquad\SQR^q(a, b) = \begin{cases}1 & \mbox{if $a - b$ is a square in $\mathbb{F}_{q^2}$,}\\0 & \mbox{if $a - b$ is not a square in $\mathbb{F}_{q^2}$.} \end{cases}  $$

Let $\DISJ^m_k$  denote the following Boolean function:
$$\DISJ^m : \binom{[m]}{k}\times \binom{[m]}{k}\to \{0, 1\}, \qquad \DISJ^m_k(a, b) = \begin{cases}1 & \mbox{if $a\cap b = \varnothing$,} \\ 0 & \mbox{if $a\cap b \neq \varnothing$.}$$ \end{cases}$$

\medskip
\textbf{Expanders.}
We consider undirected graphs which may possibly have parallel edges and self-loops. We assume that a self-loop at vertex $v$ contributes 1 to degree of $v$. A graph is called $d$-regular if each its vertex has degree $d$.

A coloring of a graph $G = (V, E)$ is a function $c: V\to\{0, 1\}$. It is called balanced if $|V|/3 \le |c^{-1}(1)| \le 2|V|/3$. For any $A\subset V$ let $\Gamma(A)$ denote the set of all $v\in V$ such that there is $u\in A$ connected with $v$ by an edge of  $G$. If $v\in V$,  define $\Gamma(v) = \Gamma(\{v\})$.

Fix graph $G = (V, E)$ and a coloring $c:V\to\{0, 1\}$. Assume that for any two distinct $u, v\in V$ it holds that $|\Gamma(u)\cap \Gamma(v)| \le 1$. Then the following partial function is well defined:
\begin{align*}
&g(G, c): V\times V\to\{0, 1\},\\ &g(G, c)(u, v)  = \begin{cases}1 & \mbox{$u\neq v$ and there is $w\in\Gamma(u)\cap \Gamma(v)$ s.t. $c(w) = 1$}, \\ 0 & \mbox{$u\neq v$ and there is $w\in\Gamma(u)\cap \Gamma(v)$ s.t. $c(w) = 0$}, \\
\mbox{undefined} & \mbox{otherwise.}\end{cases}
\end{align*}

 Let $M_G$ be an adjacency matrix of  a $d$-regular graph $G = (V, E)$ with $|V| = m$. Note that $d$ is an eigenvalue of $M_G$. A graph $G$
 is called $(m, d, \gamma)$-spectral expander if $M_G$ satisfies the following conditions:
\begin{itemize}
\item multiplicity of an eigenvalue  $d$ is 1;
\item absolute value of any other eigenvalue of $M_G$ is at most $\gamma d$.
\end{itemize}

\begin{proposition}[\cite{vadhan2012pseudorandomness}, Theorem 4.6] Assume that a graph $G = (V, E)$ is $(m, d, \gamma)$-spectral expander. Then for any $A\subset V$:
\label{spectral_expansion}
$$\frac{|\Gamma(A)|}{|A|} \ge \frac{1}{\gamma^2 + (1 - \gamma^2) \frac{|A|}{m}}.$$
\end{proposition}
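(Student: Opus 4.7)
The plan is to prove this by the standard vertex-expansion argument from spectral graph theory: pass to the normalized adjacency matrix, decompose the indicator vector of $A$ in the spectral basis, and combine Cauchy--Schwarz with the spectral gap. First I set up notation: let $M = M_G/d$ be the normalized adjacency matrix, which is symmetric and, since $G$ is $d$-regular, doubly stochastic; $\mathbf{1}$ (the all-ones vector) is an eigenvector of eigenvalue $1$, and by the $(m,d,\gamma)$-spectral expander hypothesis every other eigenvalue lies in $[-\gamma, \gamma]$. The key structural observation is that $M\mathbf{1}_A$ is supported inside $\Gamma(A)$, because the $u$-th coordinate of $M\mathbf{1}_A$ is the fraction of neighbors of $u$ lying in $A$, which vanishes precisely when $u \notin \Gamma(A)$.

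The second step is to apply Cauchy--Schwarz to the identity $\langle \mathbf{1}_{\Gamma(A)}, M\mathbf{1}_A\rangle = \langle \mathbf{1}, M\mathbf{1}_A\rangle = |A|$, where the first equality uses the support property and the second uses double stochasticity (so $\mathbf{1}^T M = \mathbf{1}^T$). This yields the pivotal inequality $|A|^2 \le |\Gamma(A)| \cdot \|M\mathbf{1}_A\|^2$, after which the problem reduces to upper-bounding $\|M\mathbf{1}_A\|^2$.

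For that bound I would split $\mathbf{1}_A = \frac{|A|}{m}\mathbf{1} + w$ with $w$ orthogonal to $\mathbf{1}$, and compute $\|w\|^2 = |A|(1 - |A|/m)$ directly. Since $M$ is symmetric with $\mathbf{1}$ as a top eigenvector, $Mw$ remains orthogonal to $\mathbf{1}$, so Pythagoras gives $\|M\mathbf{1}_A\|^2 = |A|^2/m + \|Mw\|^2$, and the spectral hypothesis upgrades this to $\|M\mathbf{1}_A\|^2 \le |A|^2/m + \gamma^2 |A|(1 - |A|/m)$. Plugging into the inequality from step two and simplifying yields $\frac{|\Gamma(A)|}{|A|} \ge \frac{1}{|A|/m + \gamma^2(1 - |A|/m)}$, which is exactly the claimed bound. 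I do not expect any genuine obstacle, since each step is routine linear algebra; the only mildly delicate point is keeping the normalization consistent (working with $M = M_G/d$ throughout so that the factors of $d$ cancel), which is why I would normalize upfront.
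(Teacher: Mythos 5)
The paper does not prove this proposition; it cites it as Theorem 4.6 of Vadhan's \emph{Pseudorandomness} survey, so there is no in-paper proof to compare against. Your argument is correct and is essentially the standard proof given in that reference: pass to the normalized adjacency matrix, use $\langle \mathbf{1}_{\Gamma(A)}, M\mathbf{1}_A\rangle = |A|$ plus Cauchy--Schwarz to reduce to bounding $\|M\mathbf{1}_A\|^2$, then decompose $\mathbf{1}_A$ along $\mathbf{1}$ and its orthogonal complement and apply the spectral gap. All steps check out, including the final algebraic rearrangement $\frac{|A|}{m} + \gamma^2\bigl(1-\frac{|A|}{m}\bigr) = \gamma^2 + (1-\gamma^2)\frac{|A|}{m}$.
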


Assume the $q$ is a power of prime. Let $AP_q$ denote the following graph. Vertices of $AP_q$ are pairs of elements of $\mathbb{F}_q$ so that the number of vertices is $q^2$. We connect $(x, y)$ with $(a, b)$ by an edge if and only if $ax = b + y$ in  $\mathbb{F}_q$. It is easy to see that $AP_q$ is $q$-regular.

\begin{proposition}[\cite{reingold2002entropy}, Lemma 5.1]
\label{affine_plane_gap}
$AP_q$ is $(q^2, q, 1/\sqrt{q})$-spectral expander.
\end{proposition}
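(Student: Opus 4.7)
The plan is to diagonalize $M = M_{AP_q}$ by first computing $M^2$ combinatorially and then decomposing $\mathbb{C}^{\mathbb{F}_q^2}$ into joint eigenspaces of the two building blocks of $M^2$.

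First I would count common neighbors. For distinct vertices $(x,y), (x',y') \in \mathbb{F}_q^2$, a common neighbor $(a,b)$ satisfies $ax = b+y$ and $ax' = b+y'$, hence $a(x-x') = y-y'$. If $x \ne x'$, this pins down $a$ and then $b$ uniquely, giving exactly one common neighbor; if $x = x'$ and $y \ne y'$, there is no common neighbor. Combined with the fact that each vertex has exactly $q$ neighbors (parametrized by $a \in \mathbb{F}_q$, with $b = ax - y$), this yields the matrix identity
$$M^2 = qI + (J - P),$$
where $J$ is the all-ones $q^2 \times q^2$ matrix and $P$ is the $\{0,1\}$-matrix with $P_{(x,y),(x',y')} = 1$ iff $x = x'$.

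Next I would diagonalize $M^2$ by splitting $\mathbb{C}^{\mathbb{F}_q^2}$ into three subspaces invariant under both $J$ and $P$: (i) the one-dimensional span of the all-ones vector $\mathbf{1}$, on which $J$ acts as $q^2$ and $P$ as $q$; (ii) the $(q-1)$-dimensional space $W_1$ of functions depending only on $x$ whose sum over $x$ is zero, on which $J$ acts as $0$ and $P$ as $q$; and (iii) the $(q^2 - q)$-dimensional space $W_2$ of functions $v$ with $\sum_y v(x,y) = 0$ for every $x$, on which both $J$ and $P$ act as zero. That these three subspaces are mutually orthogonal and span the whole space follows at once from the decomposition $v(x,y) = \bar{v} + (\phi(x) - \bar{v}) + (v(x,y) - \phi(x))$, where $\phi(x) = \frac{1}{q}\sum_y v(x,y)$ and $\bar{v} = \frac{1}{q^2}\sum_{x,y} v(x,y)$. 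Plugging into the formula for $M^2$, its eigenvalues on the three subspaces come out to $q + q^2 - q = q^2$, $q + 0 - q = 0$, and $q + 0 - 0 = q$, respectively.

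Finally, since $M$ is real symmetric, each of its eigenvectors is an eigenvector of $M^2$, and its eigenvalues are signed square roots of the corresponding $M^2$-eigenvalues. The Perron eigenvalue $q$ occurs on $\mathbf{1}$ and has multiplicity one, because its square $q^2$ already has multiplicity one in the spectrum of $M^2$; in particular $-q$ is not an eigenvalue of $M$, since $\mathbf{1}$ is not an eigenvector for $-q$ and the entire $q^2$-eigenspace of $M^2$ is exhausted by $\mathbf{1}$. All remaining eigenvalues of $M$ lie in $\{0, \pm\sqrt{q}\}$, so their absolute values are at most $\sqrt{q} = (1/\sqrt{q}) \cdot q$, which is precisely the $(q^2, q, 1/\sqrt{q})$-spectral expansion condition. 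I do not see any genuine obstacle in this plan: the proof is essentially a direct computation once one observes that common-neighbor counts in $AP_q$ depend only on whether the two vertices share an $x$-coordinate, which is an immediate consequence of the defining equation $ax = b + y$.
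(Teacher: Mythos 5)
Your proof is correct. The paper does not actually prove this proposition --- it cites it as Lemma~5.1 of Reingold--Vadhan--Wigderson --- so there is no in-paper argument to compare against, but the route you take (count common neighbors to get $M^2 = qI + J - P$, decompose $\mathbb{R}^{\mathbb{F}_q^2}$ into the three joint eigenspaces of $J$ and $P$, and recover the spectrum of $M$ as signed square roots, with the Perron eigenvalue $q$ of multiplicity one since the $q^2$-eigenspace of $M^2$ is exactly $\mathrm{span}(\mathbf{1})$) is the standard argument for this graph and in line with the cited source.
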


\medskip
\textbf{$k$-wise independent hash functions}. We will need the following 
\begin{proposition}[\cite{vadhan2012pseudorandomness}, Corollary 3.34]
\label{k_wise_hashing}
For every $n, k\in\mathbb{N}$ there exists a polynomial-time computable function $\psi:\{0, 1\}^{kn}\times \{0, 1\}^n \to \{0, 1\}$ such for all distinct $x_1, \ldots, x_k\in\{0, 1\}^n$ and for all $b_1, \ldots, b_k\in\{0, 1\}$ the following holds:
$$\Pr[\psi(s, x_1) = b_1, \ldots, \psi(s, x_k) = b_k] = 2^{-k},$$
where the probability is over uniformly random $s\in\{0, 1\}^{kn}$. 
\end{proposition}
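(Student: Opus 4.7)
The plan is to construct $\psi$ by Lagrange interpolation over a binary field. The underlying principle is that for any $k$ distinct evaluation points, a uniformly random polynomial of degree less than $k$ takes jointly uniform values at those points, because the corresponding evaluation map is a Vandermonde-invertible linear bijection.

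First I would fix, in time polynomial in $n$, an irreducible polynomial of degree $n$ over $\mathbb{F}_2$ (one may use Shoup's deterministic algorithm, or simply absorb the choice of representation into the definition of $\psi$). This identifies $\{0,1\}^n$ with $\mathbb{F}_{2^n}$ so that field operations are computable in time polynomial in $n$. Next, I would interpret the seed $s \in \{0,1\}^{kn}$ as a tuple $(s_0, s_1, \ldots, s_{k-1}) \in \mathbb{F}_{2^n}^k$ of coefficients and consider the polynomial
\[
p_s(X) \;=\; s_0 + s_1 X + s_2 X^2 + \cdots + s_{k-1} X^{k-1} \;\in\; \mathbb{F}_{2^n}[X].
\]
I would then define $\psi(s, x)$ to be, say, the first coordinate bit of the element $p_s(x) \in \mathbb{F}_{2^n} \cong \{0,1\}^n$. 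Evaluating a polynomial of degree less than $k$ at a point takes $O(k)$ field operations, so $\psi$ is polynomial-time computable.

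The main step is to verify the joint-uniformity claim. Fix distinct $x_1, \ldots, x_k \in \mathbb{F}_{2^n}$ and consider the linear map $L : \mathbb{F}_{2^n}^k \to \mathbb{F}_{2^n}^k$ given by $L(s) = (p_s(x_1), \ldots, p_s(x_k))$. Its matrix is the Vandermonde matrix $(x_i^{j})_{1 \le i \le k,\, 0 \le j \le k-1}$, whose determinant $\prod_{i<j}(x_j - x_i)$ is nonzero precisely because the $x_i$ are distinct. Hence $L$ is a bijection, so when $s$ is uniform on $\mathbb{F}_{2^n}^k$ the tuple $(p_s(x_1), \ldots, p_s(x_k))$ is uniform on $\mathbb{F}_{2^n}^k$. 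Projecting each coordinate to its first bit is balanced, and independence is preserved under independent coordinatewise functions, so $(\psi(s,x_1), \ldots, \psi(s,x_k))$ is uniform on $\{0,1\}^k$. This gives $\Pr[\psi(s, x_i) = b_i \text{ for all } i] = 2^{-k}$ for every choice of $(b_1, \ldots, b_k)$.

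There is no real obstacle here; the construction is standard and the only slightly delicate ingredient is the effective construction of a representation of $\mathbb{F}_{2^n}$, which is well known and may in any case be absorbed into the specification of $\psi$ since the proposition asks only for polynomial-time computability of $\psi$ itself.
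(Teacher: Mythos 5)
The paper does not prove this proposition; it is cited as a black-box result (Vadhan, \emph{Pseudorandomness}, Corollary 3.34). Your proposal supplies a correct proof via the standard construction --- interpret the seed as the coefficient vector of a degree-$(k-1)$ polynomial over $\mathbb{F}_{2^n}$, evaluate at the input, and output one output bit --- and the Vandermonde argument for joint uniformity of $(p_s(x_1),\ldots,p_s(x_k))$, followed by the observation that applying a fixed balanced projection coordinatewise to a uniform tuple yields a uniform bit-tuple, is exactly right. This is the same construction that the cited reference uses, so there is nothing to compare against in the paper itself; the only ingredient you rightly flag as nontrivial (a deterministic polynomial-time representation of $\mathbb{F}_{2^n}$) is standard and, as you note, can be absorbed into the definition of $\psi$.
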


\medskip
\textbf{Some useful facts.} We will use the following inequalities involving binomial coefficients:

\begin{lemma}
\label{binomial_fraction}
For every $k, m$ the following holds: if $k \le m/2$, then $\binom{m - k}{k}/ \binom{m}{k} \ge 1 - \frac{k^2}{m - k}$.
\end{lemma}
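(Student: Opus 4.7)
The plan is to rewrite the ratio $\binom{m-k}{k}/\binom{m}{k}$ as a product of $k$ simple factors and then apply Bernoulli-type reasoning. Expanding both binomial coefficients and cancelling, one gets
$$\frac{\binom{m-k}{k}}{\binom{m}{k}} = \prod_{i=0}^{k-1}\frac{m-k-i}{m-i} = \prod_{i=0}^{k-1}\left(1 - \frac{k}{m-i}\right).$$
The assumption $k\le m/2$ ensures that $m-i \ge m-(k-1) \ge k+1 > k$ for every $i\in\{0,\ldots,k-1\}$, so every factor $1 - k/(m-i)$ lies in $(0,1]$.

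The key step is then the elementary inequality $\prod_{i}(1-a_i) \ge 1 - \sum_i a_i$, valid when all $a_i \in [0,1]$, which follows by a one-line induction on the number of factors. Applying it here with $a_i = k/(m-i)$ yields
$$\frac{\binom{m-k}{k}}{\binom{m}{k}} \ge 1 - \sum_{i=0}^{k-1}\frac{k}{m-i} \ge 1 - k \cdot \frac{k}{m-(k-1)} = 1 - \frac{k^2}{m-k+1},$$
where the second inequality just uses that $m-i$ is smallest when $i=k-1$. Finally, since $m-k+1 \ge m-k$, we have $k^2/(m-k+1) \le k^2/(m-k)$, giving the claimed bound.

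There is essentially no obstacle here; this is a routine estimate. The only mild care needed is to make sure the denominators $m-k$ in the bound and $m-(k-1)$ in the intermediate step are positive, which is guaranteed by $k \le m/2$ (the edge case $k = m/2$ still gives $m - k = k > 0$ and $m-(k-1) = k+1 > 0$, so the final inequality is well-defined).
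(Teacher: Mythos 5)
Your proof is correct and takes essentially the same route as the paper: both expand the ratio as the product $\prod_{i=0}^{k-1}\frac{m-k-i}{m-i}$ and then apply a Bernoulli-type estimate, the only difference being that the paper first lower-bounds every factor by the smallest one, $\frac{m-2k}{m-k}$, and applies Bernoulli's inequality to $\left(1-\frac{k}{m-k}\right)^k$, whereas you apply the Weierstrass product inequality $\prod_i(1-a_i)\ge 1-\sum_i a_i$ directly to the unequal factors. This is a cosmetic variation (Weierstrass is just the non-uniform form of Bernoulli) and even yields the marginally sharper intermediate bound $1-\frac{k^2}{m-k+1}$, so there is nothing to fix.
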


\begin{lemma}
\label{second_binomial_lemma}
If $k \le 0.99 m$, then $\log_2\left(\binom{m}{k}/\binom{0.99 m}{k}\right) \ge 0.01k$.
\end{lemma}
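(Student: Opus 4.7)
The plan is to expand the ratio of binomial coefficients as a product and bound each factor individually. First I would write
$$\frac{\binom{m}{k}}{\binom{0.99m}{k}} = \frac{m!\,(0.99m - k)!}{(0.99m)!\,(m-k)!} = \prod_{i=0}^{k-1} \frac{m-i}{0.99m - i},$$
which is well defined whenever $k \le 0.99m$ (so none of the factors are zero or negative). Treating $0.99m$ as an integer for simplicity; if it is not, one takes the floor, which only shrinks the denominators and strengthens the bound below.

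The next step is to show that every factor in the product is at least $1/0.99$. For this, consider $\varphi(x) = (m-x)/(0.99m - x)$ as a function of a real variable on $[0, 0.99m)$. A direct computation gives $\varphi'(x) = 0.01m/(0.99m - x)^2 > 0$, so $\varphi$ is increasing and hence $\varphi(i) \ge \varphi(0) = m/(0.99m) = 1/0.99$ for every $i \in \{0, 1, \ldots, k-1\}$. Multiplying, we obtain
$$\frac{\binom{m}{k}}{\binom{0.99m}{k}} \ge \left(\frac{1}{0.99}\right)^k = \left(\frac{100}{99}\right)^k.$$

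Taking $\log_2$ of both sides, it remains to verify the numerical inequality $\log_2(100/99) \ge 0.01$. I would do this via the Taylor expansion of the natural logarithm: $\ln(1 + 1/99) \ge 1/99 - 1/(2\cdot 99^2)$, which after dividing by $\ln 2 < 0.694$ is easily seen to exceed $0.01$. Combining this with the product bound yields $\log_2(\binom{m}{k}/\binom{0.99m}{k}) \ge 0.01 k$, as required.

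There is no real obstacle here; the lemma is a tight-constant estimate whose proof is a few lines of direct computation. The only decision point is how to interpret $\binom{0.99m}{k}$ when $0.99m$ is non-integral, and in either standard reading (floor, or real-valued via the Gamma function) the product identity and the monotonicity argument above go through unchanged.
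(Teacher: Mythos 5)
Your proof is correct and follows essentially the same strategy as the paper: expand $\binom{m}{k}/\binom{0.99m}{k}$ as the product $\prod_{i=0}^{k-1}(m-i)/(0.99m-i)$ and bound each factor (equivalently, each summand of the logarithm) from below. The only cosmetic difference is in the per-factor estimate — the paper applies $\log_2(1+x) \le x$ to get $-\log_2\bigl(1 - \tfrac{0.01m}{m-i}\bigr) \ge \tfrac{0.01m}{m-i} \ge 0.01$, while you lower-bound each factor by $1/0.99$ via monotonicity and then check $\log_2(100/99) > 0.01$ numerically; both are equally routine.
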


Note that $\mathbb{F}_{q^2}$ contains a  subfield of size $q$. Namely, $\mathbb{F}_q = \{x \in\mathbb{F}_{q^2} : x^q = x\}$.

\begin{lemma}
\label{square_lemma}
Assume that $q$ is a power of an odd prime. Let $\alpha$ be a primitive root of $\mathbb{F}_{q^2}$. Then the following holds:
\begin{itemize}
\item $0, \alpha^2, \alpha^4, \ldots, \alpha^{q^2-1}$ are the only squares in  $\mathbb{F}_{q^2}$;

\item all the elements of $\mathbb{F}_q$ are squares in $\mathbb{F}_{q^2}$.
\end{itemize}

\end{lemma}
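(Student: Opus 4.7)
The plan is to use that the multiplicative group $\mathbb{F}_{q^2}^*$ is cyclic of order $q^2-1$ with generator $\alpha$, and to exploit the fact that $q$ being a power of an odd prime makes $q^2-1$ and $q+1$ even.

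First I would handle the first bullet. Since $q$ is odd, $q^2-1$ is even, so the set of even powers $S = \{\alpha^0, \alpha^2, \ldots, \alpha^{q^2-3}\}$ is the unique index-$2$ subgroup of $\mathbb{F}_{q^2}^*$ and has exactly $(q^2-1)/2$ elements. Every element of $S$ is a square because $\alpha^{2k} = (\alpha^k)^2$. Conversely, any nonzero square has the form $(\alpha^k)^2 = \alpha^{2k} \in S$, so the nonzero squares in $\mathbb{F}_{q^2}$ are exactly $S$. Together with $0$ this matches the list in the statement (keeping in mind $\alpha^{q^2-1} = 1 = \alpha^0$).

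For the second bullet, I would use that $\mathbb{F}_q^* \subseteq \mathbb{F}_{q^2}^*$ is the unique subgroup of order $q-1$, and hence equals $\langle \alpha^{(q^2-1)/(q-1)}\rangle = \langle \alpha^{q+1}\rangle$. Because $q$ is odd, $q+1$ is even, so every element of $\mathbb{F}_q^*$ is of the form $\alpha^{(q+1)j}$, which is an even power of $\alpha$, hence a square by the first part. The element $0 \in \mathbb{F}_q$ is trivially a square. An equally short alternative is Euler's criterion: for $x \in \mathbb{F}_q^*$, $x^{(q^2-1)/2} = (x^{q-1})^{(q+1)/2} = 1$, so $x$ is a square in $\mathbb{F}_{q^2}$; here $(q+1)/2$ is a nonnegative integer precisely because $q$ is odd.

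There is no real obstacle; the only care required is to check the parity conditions ($q^2-1$ even, $q+1$ even) that rely on $q$ being an odd prime power, and to note that $\mathbb{F}_q^*$ is \emph{the} unique subgroup of $\mathbb{F}_{q^2}^*$ of order $q-1$ (so that it can be identified with $\langle \alpha^{q+1}\rangle$ inside the cyclic group $\mathbb{F}_{q^2}^*$).
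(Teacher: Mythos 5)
Your proof is correct and follows essentially the same route as the paper's: both rest on $\mathbb{F}_{q^2}^*$ being cyclic of even order $q^2-1$ so that nonzero squares are exactly the even powers of $\alpha$, and both identify $\mathbb{F}_q^*$ with the subgroup generated by $\alpha^{q+1}$ (the paper phrases this via the divisibility $k(q-1)\equiv 0\pmod{q^2-1}\Rightarrow (q+1)\mid k$, which is your subgroup observation unwound). The Euler-criterion alternative you mention is a nice shortcut but is not what the paper uses.
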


Proofs of Lemmas \ref{binomial_fraction}, \ref{second_binomial_lemma} and \ref{square_lemma} can be found in Appendix.

\section{Transforming Expanders into Gadgets}

In this section we prove Theorem \ref{from_expanders_to_hitting} and derive Corollary \ref{main_corollary}.

\begin{proof}[Proof of Theorem \ref{from_expanders_to_hitting}] 
Fix $b\in\{0, 1\}$ and set $h = \lfloor 2\log_2(1/\gamma)\rfloor - 100$. Let us define a $(\frac{1}{10}, h)$-hitting distribution $\mu_b$ over $b$-monochromatic rectangles of $g(G, c)$. Take $v\in c^{-1}(b)$ uniformly at random. Split $\Gamma(v)$ into two disjoints subsets $A, B$ randomly according to 10-wise independent hash function $\psi:\{0, 1\}^{10 \cdot\lceil \log_2 m \rceil} \times \{0, 1\}^{\lceil\log_2 m\rceil} \to\{0, 1\}$ from Proposition \ref{k_wise_hashing}. Namely, take $s\in\{0, 1\}^{10\lceil\log_2 m\rceil}$ uniformly at random. An element $u\in \Gamma(v)$ goes into $A$ if $\psi(s, u) = 0$ and into $B$ if $\psi(s, u) = 1$.  By definition $A\times B$ is a $b$-monochromatic rectangle of $g(G, c)$. Indeed, any two distinct vertices from $\Gamma(v)$ have a common neighbor colored in $b$. It remains to show that for all $S, T\subset V$ of size at least $2^{-h} m$ with probability at least $0.9$ we have that $A\times B \cap S\times T \neq \varnothing$.  It is enough to show that $\Pr[A\cap S \neq \varnothing] \ge 0.96$ and $\Pr[B\cap T \neq \varnothing] \ge 0.96$. Let us show that the first inequality holds, the proof of the second inequality  is exactly the same. Actually we will show that $\Pr[|\Gamma(v) \cap S| \ge 10] \ge 0.97$. This is enough for our purposes: conditioned on $[|\Gamma(v) \cap S| \ge 10]$ the probability that $A$ is disjoint with $S$ is at most $2^{-10}$ (due to proposition \ref{k_wise_hashing} this is the probability that $\psi(s, \cdot)$ sends  10 fixed points of $\Gamma(v)$  into $B$). Therefore $\Pr[A\cap S] \ge (1 - 2^{-10}) \Pr[|\Gamma(v) \cap S| \ge 10] \ge 0.999 \cdot 0.97 > 0.96$.

The size of $S$ is at least $2^{100}\gamma^2 m$. Partition $S$ into 10 disjoint subsets $S_1, \ldots, S_{10}$, each of size at least $2000 \lfloor\gamma^2 m\rfloor$. Since $m \ge 1/\gamma^2$, we also have $|S_1|, \ldots, |S_{10}| \ge 1000 \gamma^2 m$.  If $|\Gamma(v) \cap S| < 10$, then $\Gamma(v)$ is disjoint with $S_i$ for some $i\in[10]$. Hence
$$
\Pr[|\Gamma(v) \cap S| < 10] \le \sum\limits_{i = 1}^{10} \Pr[\Gamma(v) \cap S_i = \varnothing].
$$

If we show for all $i\in [10]$ that $\Pr[\Gamma(v) \cap S_i = \varnothing] \le 0.003$, we are done. Observe that $\Gamma(v)$ is disjoint with $S_i$ if and only if $v\notin \Gamma(S_i)$. This implies that
\begin{equation}
\label{gamma}
\Pr[\Gamma(v) \cap S_i = \varnothing] = \frac{ |c^{-1}(b) \setminus \Gamma(S_i)|}{|c^{-1}(b)|} \le \frac{m - |\Gamma(S_i)|}{\frac{m}{3}}.
\end{equation}
In the last inequality we use the fact that $c$ is balanced. By Proposition \ref{spectral_expansion} we get
\begin{align*}
|\Gamma(S_i)| &\ge \frac{|S_i|}{\gamma^2 +  \frac{|S_i|}{m}} 
\ge \frac{|S_i|}{\frac{|S_i|}{1000 \cdot m} +  \frac{|S_i|}{m}} \ge \frac{1000 \cdot m}{1001} > 0.999m.
\end{align*}
Here in the second inequality we use the fact that $|S_i| \ge 1000 \gamma^2 m$. Due to \eqref{gamma}  this means that $\Pr[\Gamma(v) \cap S_i = \varnothing] \le 0.003$ and thus the proof that  $\mu_b$ is $\left(\frac{1}{10}, h\right)$-hitting is finished.

Let us now show that $\mu_b$ can be  ``written down'' in time $m^{O(1)}$ from $G$ and $c$.  First of all, note that $g(G, c)$ is a gadget on $k = \lceil \log_2 m \rceil $ bits. To specify a rectangle from a support of $\mu_b$ we need to specify a vertex of $G$ and a ``seed'' $s$ of length $10k$. This shows that the support of $\mu_b$ is of size $m^{O(1)} = 2^{O(k)}$. This observation also allows us  to list all the rectangles from the support of $\mu_b$ in time $2^{O(k)}$ --- just go through all vertices from $c^{-1}(b)$ and all seeds. Further, the $\mu_b$-probability of $A\times B$ can be computed as follows:
\begin{align*}
\mu_b(A\times B) =& \frac{|\{v \in V : \Gamma(v) = A\cup B\}}{|c^{-1}(b)|} \\ &\cdot \frac{\left|\{s\in \{0, 1\}^{10k} : \mbox{$\phi(s, \cdot)$ splits $A\cup B$ into $A$ and $B$} \}\right|}{2^{10 k}}.
\end{align*}
This probability is rational and  can be computed in time $2^{O(k)}$, again by exhaustive search over all vertices and seeds. 
\end{proof}

Now let us derive Corollary \ref{main_corollary}.  Indeed, $AP_q$ is  $(q^2, q, 1/\sqrt{q})$-spectral expander  by Proposition \ref{affine_plane_gap}. Thus theorem \ref{from_expanders_to_hitting}, applied to $AP_q$, states that for any balanced coloring $c$ of $AP_q$ and for any $b\in\{0, 1\}$ there exists $\left(\frac{1}{10}, \lfloor \log_2(q)\rfloor - 100\right)$-hitting distribution over $b$-monochromatic rectangles of $g(AP_q, c)$. Apply Theorem \ref{simulation_theorem} to these hitting distributions with $\varepsilon = 1 - \log_2(n)/(\lfloor \log_2(q)\rfloor - 100)$.

 We  only need to check that in $AP_q$  for any two distinct vertices $u, v$ is holds that $|\Gamma(u)\cap \Gamma(v)| \le 1$. Assume that $(x, y)$ and $(u, v)$ are distinct vertices of $AP_q$. Take any $(a, b) \in \Gamma((x, y)) \cap \Gamma((u, v))$. Then

\begin{equation}
\label{system}
\begin{pmatrix}x & -1 \\ u & -1\end{pmatrix} \cdot \begin{pmatrix} a \\ b\end{pmatrix} = \begin{pmatrix} y \\ v\end{pmatrix}.
\end{equation}

If $x\neq u$, then $\mathsf{det} \begin{pmatrix}x & -1 \\ u & -1\end{pmatrix} \neq 0$ and hence system \eqref{system} has exactly one solution. If $x = u$, then $y\neq v$ and system \eqref{system} has no solution. Therefore 
$|\Gamma((x, y)) \cap \Gamma((u, v))|\le 1$.

\section{$\SQR^q$ Gadget}

In this section we prove Proposition \ref{square_proposition}.

Fix $w\in\mathbb{F}_{q^2}$ such that $\{1, w\}$ is a basis of $\mathbb{F}_{q^2}$ over $\mathbb{F}_q$. Consider the following coloring of $AP_q$: set $c((a, b)) = 1$ if and only if $1 + wa$ is a square in $\mathbb{F}_{q^2}$; clearly a truth table of such $c$ can be computed in time $q^{O(1)}$. Note that $g(AP_q, c)((x, y), (u, v))$ is defined if and only if $(x, y), (u, v)$ are distinct and there is $(a, b) \in \Gamma((x, y)) \cap \Gamma((u, v))$. Let us show that for any such $(x, y), (u, v)$ it holds that
\begin{equation}
\label{subfunction}
g(AP_q, c)((x, y), (u, v)) = c((a, b)) = \SQR^q(x + yw, u + v w).
\end{equation}
Indeed, we have that $ax = b + y, au = b + v$. This means that  $y - v = a(x - u)$. Moreover, due to distinctness of $(x, y), (u, v)$ we have that $x\neq u$. Further,
$$x + yw - (u + vw) = (x - u) + w(y - v) = (x - u) (1 + wa).$$
Note that $x - u$ is a non-zero element of $\mathbb{F}_q$. By the second item of Lemma \ref{square_lemma} this implies that $x + yw - (u + vw)$ is a square if and only if $1 + wa$ is a square. Hence \eqref{subfunction} is true for all $(x, y), (u, v)$ from the domain of $g(AP_q, c)$.

It remains to show that $c$ is balanced. Take $(a, b, \lambda) \in \mathbb{F}_q \times \mathbb{F}_q  \times (\mathbb{F}_q\setminus\{0\})$ uniformly at random. Note that $c((a, b)) = 1$ if and only if $1 + wa$ is a square. Thus $|c^{-1}(1)| = q^2\Pr[1 + wa \mbox{ is a square}]$. Due to the second item of Lemma \ref{square_lemma} we have that $1 + wa$ is a square if and only if $\lambda(1 + wa) $ is a square. Note that $\lambda(1 + wa) = \lambda + \lambda a w$ is distributed uniformly in $\{i + wj : i, j\in\mathbb{F}_q, i \neq 0\}$ (this is because for any $\lambda_0$ the distribution of $\lambda a$ given $\lambda = \lambda_0$ is uniform in $\mathbb{F}_q$). Due to the first item of Lemma \ref{square_lemma} for all large enough $q$ there are at least $0.4 q^2$ squares and at least $0.4q^2$ non-squares in $\{i + wj : i, j\in\mathbb{F}_q, i \neq 0\}$. This means that $1/3 \le \Pr[\lambda(1 + wa) \mbox{ is a square}] \le 2/3$ for all large enough $q$. Hence $q^2/3 \le |c^{-1}(1)| \le 2q^2/3$ and $c$ is balanced.

\section{Unimprovaibilty of Thickness Lemma}

Consider any set $X\subset \{0, 1, \ldots, m - 1\}^n$ and take any $i\in \{1, 2, \ldots, n\}$. Let us say that $x\in X$ is \emph{$i$-unique in $X$} if there is no other $x^\prime \in X$ such that
$$x_1 = x^\prime_1, \ldots, x_{i - 1} = x_{i - 1}^\prime, x_{i + 1} = x_{i + 1}^\prime, \ldots, x_n = x^\prime_n.$$

Call a set $X\subset\{0, 1, \ldots, m - 1\}^n$ \emph{reducible} if for all \ non-empty $Y\subset X$ there is $i\in\{1, 2, \ldots, n\}$ such that $MinDeg_i(Y) = 1$. Note that $X$ is reducible if and only if for all non-empty $Y\subset X$ there is $y\in Y$ which is $i$-unique in $Y$ for some $i\in\{1, 2, \ldots, n\}$.
\begin{lemma} 
\label{main_thickness_lemma}
For every $\varepsilon > 0$ and for every $n\ge 2$ there exists $m> 0$ and a reducible set $X\subset\{0, 1, \ldots, m - 1\}^n$ such that for all $i\in\{1, 2, \ldots, n\}$ it holds that $AvgDeg_i(X) \ge n - \varepsilon$.
\end{lemma}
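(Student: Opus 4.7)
The plan is to exhibit $X$ explicitly. For the base case $n = 2$, take $X$ to be the staircase path $X = \{(i,i),(i,i+1): i = 0,\dots,T-1\} \cup \{(T,T)\}$ with $T = \lceil 1/\varepsilon\rceil$. A direct count gives $|X| = 2T+1$ and $|X_{\{1\}}| = |X_{\{2\}}| = T+1$, so $AvgDeg_j(X) = (2T+1)/(T+1) \ge 2-\varepsilon$. The bipartite graph whose edges are the elements of $X$ is a path on $2T+2$ vertices (hence a tree), and every non-empty subgraph of a tree has a vertex of degree one; a leaf on the coordinate-$j$ side corresponds to an element of the corresponding subset that is $(3-j)$-unique, establishing reducibility.

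For general $n \ge 3$ the construction is more delicate, as natural lifts of the 2D path do not reach average degree close to $n$. For example, the $n$-dimensional staircase $\{x_1 \le \dots \le x_n,\ x_{i+1} - x_i \le 1\}$ only yields $AvgDeg_i \to 2$, while the monotone simplex $\{0 \le x_1 \le \dots \le x_n \le L\}$ attains $AvgDeg_i = (L+n)/n \to n$ but is already unpeelable for $L \ge 2$ (even for $n = 2, L = 2$ the subset $\{(0,1),(0,2),(1,1),(1,2)\}$ forms a $4$-cycle in the bipartite graph and has no unique element). The plan is to build $X$ via a \emph{reverse-peeling} schedule: list the elements as $x^{(|X|)}, x^{(|X|-1)}, \dots, x^{(1)}$ with a direction label $i_k \in [n]$ attached to each, imposing that (A) the $[n]\setminus\{i_k\}$-projection of $x^{(k)}$ is distinct from those of $x^{(|X|)},\dots,x^{(k+1)}$, which makes $x^{(k)}$ automatically $i_k$-unique when peeled (hence $X$ is reducible), and (B) for every $j \ne i_k$, the $[n]\setminus\{j\}$-projection of $x^{(k)}$ coincides with that of some already placed $x^{(l)}$ ($l > k$), which forces $|X_{[n]\setminus\{j\}}|$ to be essentially $n_j := |\{k : i_k = j\}|$, so balancing $n_j \approx |X|/n$ yields $AvgDeg_j \approx n$.

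The main obstacle is realising this schedule on a concrete lattice: for most $k$ the element $x^{(k)}$ must have $n-1$ distinct previously placed partners, one for each direction $j \ne i_k$, that agree with $x^{(k)}$ on all coordinates but the $j$-th. The earliest few placements cannot satisfy (B) and contribute an $O(1)$ excess to each projection count, but these boundary terms are negligible once $|X|$ is taken large enough in terms of $n$ and $\varepsilon$. A natural attempt is to enumerate a box $\{0,\dots,m-1\}^n$ in lexicographic order and pick $i_k$ to be the coordinate in which $x^{(k)}$ is ``newest''; the required partners then sit at lexicographically earlier lattice points, and the final estimate of $AvgDeg_i$ reduces to a combinatorial count of elements versus projections in the chosen lattice.
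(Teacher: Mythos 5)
Your base case $n=2$ is correct and is essentially the paper's set $X_2$ (the integer staircase); your peeling-order argument for reducibility is also sound in principle: if you can order $X$ as $x^{(1)},\dots,x^{(|X|)}$ with labels $i_k$ so that $x^{(k)}$ is $i_k$-unique in $\{x^{(k)},\dots,x^{(|X|)}\}$, then any non-empty $Y\subset X$ contains a smallest-index element, which is $i_k$-unique in $Y$. Your diagnostic observations about the $n$-dimensional staircase ($AvgDeg\to 2$) and the monotone simplex (right count, unpeelable) are accurate and actually explain why the construction is delicate.

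However, for $n\ge 3$ you stop at a plan, not a proof. You describe constraints (A) and (B) for a reverse-peeling schedule, note that the boundary terms should be negligible, and say ``a natural attempt is to enumerate a box in lexicographic order'' — but you do not exhibit a concrete set, do not verify that the partner structure (B) can actually be realised on a box, and do not carry out the average-degree count. This is the crux of the lemma, and it is left unfinished. The paper closes exactly this gap with a short recursion: set $X_{n+1}=\{(x,j): x\in X_n,\ j\in\{0,\dots,m-1\}\}\cup\{(y,0): y\notin X_n\}$. The slab $\{(y,0):y\notin X_n\}$ is peeled first (each such $(y,0)$ is $(n+1)$-unique since no $(y,j)$ with $j>0$ lies in $X_{n+1}$), and what remains is a cylinder over the reducible $X_n$, peeled slice by slice. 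The recursion also yields a clean counting identity $|X_{n+1}|=(m-1)|X_n|+m^n$, hence $|X_n|\ge n(m-1)^{n-1}$ and $AvgDeg_i(X_n)\ge n(1-1/m)^{n-1}\to n$. Unless you supply a concrete lattice set and verify both (A)/(B) and the degree estimate, the proposal does not establish the lemma for $n\ge 3$.
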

\begin{proof}
Take any $m > 0$. Consider the following sequence of sets $X_2, X_3, \ldots$, where $X_n$ is a subset of $\{0, 1, \ldots, m - 1\}^n$: 
$$X_2 = \{(j, j) : j\in\{0, 1, \ldots, m - 1\}\} \cup \{(j, j + 1) : j \in \{0, 1,\ldots, m - 2\}\},$$
\begin{align*}
X_{n + 1} &= \left\{(x, j) : x\in X_n,\, j \in\{0, 1, \ldots, m - 1\} \right\} \\ &\cup \left\{(y, 0) : y\in \{0, 1, \ldots, m - 1\}^n/ X_n\right\}.
\end{align*}

We have the following relation between the size of $|X_{n + 1}|$ and the size of $|X_n|$:
$$|X_{n + 1}| = m\cdot |X_n| + m^n - |X_n| = (m - 1) \cdot |X_n| + m^n.$$

Let us show by induction on $n$ that $|X_n| \ge n(m - 1)^{n - 1}$. Indeed, for $n = 2$ this inequality is true: $|X_2| = 2m - 1 > 2(m - 1)$. Now, assume that $|X_n| \ge n(m - 1)^{n - 1}$ is already proved. Then
\begin{align*}
|X_{n + 1}| &= (m  - 1)\cdot |X_n| + m^n \\
&\ge (m - 1) \cdot n(m - 1)^{n - 1} + (m - 1)^n \\
&\ge (n + 1) (m - 1)^n.
\end{align*}

This means that for every $i\in[n]$ it holds that
$$AvgDeg_i(X_n)  = \frac{|X_n|}{|(X_n)_{[n]/\{i\}}|}\ge \frac{n (m - 1)^{n - 1}}{m^{n - 1}} = n \left(1 - \frac{1}{m}\right)^{n - 1},$$
and the latter tends to $n$ as $m\to \infty$. Thus to show the lemma it is sufficient to show that $X_n$ is reducible. Once again, we will show it by induction on $n$. 

Consider $n = 2$ and take any non-empty $Y\subset X_2$. Let $y\in Y$ be the smallest element of $Y$ in lexicographical order. If $y = (j, j)$, then $y$ is $1$-unique in $Y$ and hence  $MinDeg_1(Y) = 1$. If $y = (j, j + 1)$, then $y$ is $2$-unique in $Y$ and hence $MinDeg_2(Y) = 1$.

Further, assume that $X_n$ is reducible. Consider any non-empty $Y\subset X_{n + 1}$. Assume that $Y$ intersects $\left\{(y, 0) : y\in \{0, 1, \ldots, m - 1\}^n/ X_n\right\}$ and hence for some $y\notin X_n$ it holds that $(y, 0)\in Y$.  Then $MinDeg_{n + 1}(Y) = 1$. Indeed, in this case $(y, 0)$ is $(n + 1)$-unique in $Y$, because if $(y, j)\in Y \subset X_{n + 1}$ for some $j > 0$, then $y\in X_n$, contradiction.

Now assume that $Y$ is a subset of $\left\{(x, j) : x\in X_n,\, j \in\{0, 1, \ldots, m - 1\} \right\}$. Then for some $j\in\{0, 1, \ldots, m - 1\}$ a set  $Y^\prime = \{x\in X_n :  (x, j) \in Y\}$ is non-empty. Since by induction hypothesis $X_n$ is reducible, there is $y \in Y^\prime$ which is $i$-unique in $Y^\prime$ for some $i\in [n]$. Let us show that $(y, j)$ is $i$-unique in $Y$ (this would mean that $MinDeg_i(Y) = 1$). Indeed, assume that there is $(y^\prime, j^\prime) \in Y$ which coincides with $(y, j)$ on all the coordinates except the $i^{th}$ one. Then $j = j^\prime$ and $y^\prime \in Y^\prime$. Due to $i$-uniqueness of $y\in Y^\prime$ we also have  that $y = y^\prime$.

\end{proof}

\begin{definition}
Let $s, m, n$ be positive integers and assume that $X$ is a subset of $\{0, 1, \ldots, m - 1\}^n$. Let $In(X, s)\subset \{0, 1, \ldots, sm - 1\}^n$ denote the following set:
\begin{align*}
In(X, s) = \{ (sx_1 + r_1, s x_2 + r_2, &\ldots, s x_n + r_n) :\\  &(x_1, \ldots, x_n)\in X,\, r_1, \ldots, r_n\in\{0, 1, \ldots, s - 1\} \}.
\end{align*}
\end{definition}

Observe that for every $(y_1, \ldots, y_n) \in In(X, s)$ there is exactly one $(x_1, \ldots, x_n) \in X$ such that for some $r_1, \ldots, r_n\in\{0, 1, \ldots, s - 1\}$ it holds that 
$$y_1 = sx_1 + r_1, \ldots, y_n = sx_n + r_n.$$

\begin{lemma}
\label{size_lemma}
 For every $i\in\{1, 2, \ldots, n\}$ it holds that
$AvgDeg_i(In(X, s)) = s \cdot AvgDeg_i(X)$.
\end{lemma}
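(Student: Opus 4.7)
The plan is to compute both $|In(X,s)|$ and $|In(X,s)_{[n]\setminus\{i\}}|$ directly and then take their ratio, since by definition $AvgDeg_i(In(X,s)) = |In(X,s)| / |In(X,s)_{[n]\setminus\{i\}}|$.

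First I would compute $|In(X,s)|$. The observation stated right after the definition of $In(X,s)$ says that every $(y_1,\ldots,y_n) \in In(X,s)$ corresponds to a unique $(x_1,\ldots,x_n) \in X$ together with a unique tuple $(r_1,\ldots,r_n) \in \{0,1,\ldots,s-1\}^n$ via $y_j = sx_j + r_j$. In other words, the map $(x, r) \mapsto (sx_1 + r_1, \ldots, sx_n + r_n)$ is a bijection from $X \times \{0,\ldots,s-1\}^n$ onto $In(X,s)$. Hence $|In(X,s)| = s^n \cdot |X|$.

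Next I would compute $|In(X,s)_{[n]\setminus\{i\}}|$. An element of this projection is a tuple obtained by erasing the $i$-th coordinate of some point of $In(X,s)$. Using the bijection above, such a tuple has the form $(sx_j + r_j)_{j\neq i}$ where $(x_1,\ldots,x_n) \in X$ and the $r_j \in \{0,\ldots,s-1\}$ for $j \neq i$. The $i$-th coordinate of the original tuple in $X$ can be anything, so the projection depends only on $(x_j)_{j\neq i} \in X_{[n]\setminus\{i\}}$ together with the $(r_j)_{j\neq i}$. I would then argue that the same uniqueness argument (applied in dimension $n-1$) shows that this gives a bijection between $X_{[n]\setminus\{i\}} \times \{0,\ldots,s-1\}^{n-1}$ and $In(X,s)_{[n]\setminus\{i\}}$, so the latter has size $s^{n-1} \cdot |X_{[n]\setminus\{i\}}|$. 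Concretely, $In(X,s)_{[n]\setminus\{i\}} = In\bigl(X_{[n]\setminus\{i\}}, s\bigr)$, viewed as a subset of $\{0,\ldots,sm-1\}^{n-1}$.

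Combining the two counts gives
\[
AvgDeg_i(In(X,s)) = \frac{|In(X,s)|}{|In(X,s)_{[n]\setminus\{i\}}|} = \frac{s^n |X|}{s^{n-1} |X_{[n]\setminus\{i\}}|} = s \cdot AvgDeg_i(X),
\]
as required. The only subtle point is verifying the uniqueness of the representation for the projected set (the $n-1$ dimensional analogue of the observation), but this follows immediately from the integer division argument: given $(y_j)_{j\neq i}$, the pair $(x_j, r_j)$ is recovered as $x_j = \lfloor y_j/s \rfloor$ and $r_j = y_j \bmod s$, and the condition that $(x_j)_{j\neq i}$ lies in $X_{[n]\setminus\{i\}}$ is exactly the membership condition.
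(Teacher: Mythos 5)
Your proof is correct and follows exactly the paper's approach: establish the two identities $|In(X,s)| = s^n|X|$ and $|In(X,s)_{[n]\setminus\{i\}}| = s^{n-1}|X_{[n]\setminus\{i\}}|$ and divide. The paper states these equalities without justification; you supply the bijection argument, which is a reasonable elaboration of the same idea.
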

\begin{proof}
Lemma follows from the following two equalities:
$$|In(X, s)| = s^n \cdot |X|, \qquad |In(X, s)_{[n]/\{i\}}| = s^{n - 1} \cdot |X_{[n]/\{i\}}|.$$
\end{proof}
\begin{lemma}
\label{inflation_lemma}
Assume that $X\subset \{0, 1, \ldots, m - 1\}^n$ is reducible. Then for all non-empty $Y\subset In(X, s)$ there is $i\in[n]$ such that $MinDeg_i(Y) \le s$.
\end{lemma}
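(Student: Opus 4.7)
\medskip
\noindent\textbf{Proof plan for Lemma \ref{inflation_lemma}.} The plan is to use the natural projection $\pi \colon In(X, s) \to X$ that strips off the remainders, namely $\pi(sx_1 + r_1, \ldots, sx_n + r_n) = (x_1, \ldots, x_n)$; this map is well defined by the observation stated just after Definition 2 (each point of $In(X, s)$ has a unique such representation, because the $x_i$'s are forced by Euclidean division modulo $s$). The key structural fact I will use is that $\pi$ is a fibration: every fiber has size $s^n$, and within a fiber the set of points that agree with a given $y \in In(X, s)$ on all coordinates except the $i$-th consists of exactly $s$ elements, parametrized by the choice of remainder $r_i \in \{0, \ldots, s-1\}$.

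Given a non-empty $Y \subset In(X, s)$, first I would set $Y' = \pi(Y) \subset X$, which is non-empty. Since $X$ is reducible, there exists some $i \in [n]$ and some $x \in Y'$ that is $i$-unique in $Y'$. Then I would pick any preimage $y \in Y$ with $\pi(y) = x$, write $y = (sx_1 + r_1, \ldots, sx_n + r_n)$, and consider the right-side vertex $v = (y_1, \ldots, y_{i-1}, y_{i+1}, \ldots, y_n)$ of the auxiliary graph $G_i(Y)$.

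To bound the degree of $v$, I would argue as follows: any neighbor of $v$ corresponds to some $a \in \{0, 1, \ldots, sm - 1\}$ such that $(y_1, \ldots, y_{i-1}, a, y_{i+1}, \ldots, y_n) \in Y \subset In(X, s)$. Applying $\pi$ to this point, its $j$-th coordinate for $j \neq i$ is determined to be $x_j$ (from $y_j = sx_j + r_j$), while its $i$-th coordinate is $\lfloor a/s \rfloor$. So $\pi$ sends this point to some $x^\ast \in Y'$ that agrees with $x$ on every coordinate outside position $i$. By the $i$-uniqueness of $x$ in $Y'$, we must have $x^\ast = x$, which forces $a \in \{sx_i, sx_i + 1, \ldots, sx_i + s - 1\}$. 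Hence $v$ has at most $s$ neighbors, giving $MinDeg_i(Y) \le s$, as required.

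This is the whole argument; there is no real obstacle beyond setting up the projection correctly and verifying that $i$-uniqueness in the quotient $Y'$ lifts, across a single fiber of $\pi$, to the degree-$\le s$ bound in $Y$. The proof should be roughly a half-page once written in full.
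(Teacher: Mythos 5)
Your argument is correct, and it takes a genuinely different and in fact cleaner route than the paper's. The paper proves the lemma by induction on $|X|$: it picks an $x$ that is $i$-unique in $X$ (the whole ambient set), and then splits into two cases depending on whether $Y$ contains any preimage of $x$; when it does not, the argument falls back on the induction hypothesis applied to $X\setminus\{x\}$. You avoid the induction and the case split entirely by the right choice of where to apply reducibility: you project $Y$ down to $Y'=\pi(Y)\subset X$, invoke reducibility of $X$ \emph{on the subset $Y'$} to get an $x\in Y'$ that is $i$-unique in $Y'$, and then lift to any preimage $y\in Y$ of $x$. Because $x$ was chosen $i$-unique in $Y'$ rather than in $X$, a preimage in $Y$ is guaranteed to exist, so the degenerate case simply does not arise, and a single fiber computation (the set of $a$'s with $\lfloor a/s\rfloor = x_i$ has exactly $s$ elements) gives $MinDeg_i(Y)\le s$ directly. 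Both proofs hinge on the same structural fact — $i$-uniqueness upstairs collapses the neighborhood of a right vertex downstairs to one fiber of size $s$ — but your formulation of the reducibility-applied-to-$\pi(Y)$ step eliminates the bookkeeping, so I would regard it as the preferable write-up.
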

\begin{proof}
Let us prove this lemma by induction on $|X|$.

\emph{Induction base}. Assume that $|X| = 1$ and $X = \{x\}$. Consider any $i\in[n]$. Each right vertex in $G_i(In(X, s)) $ is connected with exactly  $s$ left vertices. Namely, these vertices are $sx_i, sx_i + 1, \ldots, sx_i + s - 1\in \{0, 1, \ldots, sm - 1\}$.  This  implies that for all non-empty $Y\subset In(X, s)$ and \emph{for all} $i\in [n]$ it holds that $MinDeg_i(Y) \le s$.

\emph{Induction step.} Assume that for all reducible $X$ of size at most $t$ the lemma is proved. Take any reducible $X\subset \{0, 1, \ldots, m - 1\}^n$ of size $t + 1$. Since $X$ is reducible, there is $i\in[n]$ such that $MinDeg_i(X) = 1$. This means that there is  $x = (x_1, \ldots, x_n)\in X$ which is $i$-unique in $X$.

Assume for contradiction that there exists  a non-empty $Y\subset In(X, s)$ such that for all $j\in[n]$ it holds that $MinDeg_j(Y) \ge s + 1$. There are two cases:
\begin{itemize}

\item  \emph{The first case. There are  $r_1, \ldots, r_n\in\{0, 1, \ldots, s - 1\}$  such that $\hat{x} = (sx_1 + r_1, \ldots, s x_n + r_n) \in Y$}. Let us show that $(\hat{x}_1, \ldots, \hat{x}_{i - 1}, \hat{x}_{i + 1}, \ldots, \hat{x}_n)$ is a right vertex of $G_i(Y)$ which is connected with at most $s$ left vertices (and thus $MinDeg_i(Y) \le s$). Namely, we will show that if $v\in \{0, 1, \ldots, sm - 1\}$ is connected with $(\hat{x}_1, \ldots, \hat{x}_{i - 1}, \hat{x}_{i + 1}, \ldots, \hat{x}_n)$, then $v = sx_i + r$ for some $r\in\{0, 1, \ldots, s\}$. Indeed, if $(\hat{x}_1, \ldots, \hat{x}_{i - 1}, v, \hat{x}_{i + 1}, \ldots, \hat{x}_n)\in Y \subset In(X, s)$, then for some $x_i^\prime \in\{0, 1, \ldots, m - 1\}$ and $r\in\{0, 1, \ldots, s - 1\}$ it holds that $v = sx_i^\prime + r$ and $(x_1, \ldots, x_{i - 1}, x_i^\prime, x_{i + 1},\ldots, x_n)\in X$. The latter due to $i$-uniqueness of $x$ means that $x_i = x_i^\prime$. 

\item \emph{The second case. There are  no $r_1, \ldots, r_n\in\{0, 1, \ldots, s - 1\}$  such that $(sx_1 + r_1, \ldots, s x_n + r_n) \in Y$}. Clearly, $X/\{x\}$ is also reducible. But in this case $Y\subset In(X/\{x\}, s)$ and the latter contradicts   induction hypothesis for $X/\{x\}$.
\end{itemize}

\end{proof}

\begin{proof}[Proof of Theorem \ref{thickness_lemma_lower_bound}]
Due to Lemma \ref{main_thickness_lemma} there is a reducible $X^\prime\subset\{0, 1, \ldots, m - 1\}^n$ such that for every $i\in[n]$ we have $AvgDeg_i(X^\prime) \ge n - \varepsilon$. By Lemma \ref{size_lemma}, applied to $X = In(X^\prime, s)$ for every $i\in[n]$ we have:
$AvgDeg_i(X) \ge s(n - \varepsilon).$
Finally, due to Lemma \ref{inflation_lemma}, for all non-empty $Y\subset X$ there is $i\in[n]$ such that $MinDeg_i(Y) \le s$.
\end{proof}

\section{Expanders Similar to $AP_q$}

In this section we prove Proposition \ref{affine_like_proposition}. Let us stress that this Proposition is just a slight improvement of Proposition \ref{spectral_expansion} for sets of size 2.  Proposition \ref{spectral_expansion} itself is not strong enough to conclude that in all $(m^2, m, 1/\sqrt{m})$-spectral expanders any two distinct vertices have at most 1 common neighbor. 

For $S\subset V$ let $\mathbb{I}_S\in\mathbb{R}^{|V|}$ denote characteristic vector of a set $S$. Assume for contradiction that there are distinct $u, v\in V$ such that $|\Gamma(u)\cap \Gamma(v)| \ge 2$. Then the size of  $\Gamma(\{u, v\})$ is at most $2d - 2$. Assume that $M$ is the adjacency matrix of $G$. Denote $w = \{u, v\}$. Let us show that
\begin{equation}
\label{affine_upper_bound}
\| M \mathbb{I}_w\|^2 \le d^2\left(2\gamma^2 + \frac{4(1 - \gamma^2)}{m}\right).
\end{equation}
Indeed, observe that $\mathbb{I}_w = \frac{2}{m} \mathbb{I}_V + (\mathbb{I}_w  - \frac{2}{m} \mathbb{I}_V)$ and $(\mathbb{I}_w  - \frac{2}{m} \mathbb{I}_V)$ is perpendicular to $\mathbb{I}_V$. Since $G$ is a $(m, d, \gamma)$-spectral expander, this implies that
\begin{align*}
\| M \mathbb{I}_w\|^2 &= \|M\left(\frac{2}{m}\mathbb{I}_w\right) \|^2 + \|M\left(\mathbb{I}_w  - \frac{2}{m} \mathbb{I}_V\right)\|^2\\
&\le \frac{4d^2}{m} + \gamma^2 d^2 \|\left(\mathbb{I}_w  - \frac{2}{m} \mathbb{I}_V\right)\|^2 \\
&= \frac{4d^2}{m} + \gamma^2 d^2 \left(2\left(1 - \frac{2}{m}\right)^2 + (m - 2) \frac{4}{m^2}\right)\\
&= \frac{4d^2}{m} + \gamma^2 d^2 \left(2 - \frac{4}{m}\right) = d^2\left(2\gamma^2 + \frac{4(1 - \gamma^2)}{m}\right),
\end{align*}
and thus \eqref{affine_upper_bound} is proved.

To obtain a contradiction it is enough to show the following inequality
\begin{equation}
\label{affine_lower_bound}
\| M\mathbb{I}_w\|^2 \ge 2d + 4. 
\end{equation}

Assume that there are $t \le 2d - 2$ non-zero coordinates in $M\mathbb{I}_w$. Let $\xi_1, \ldots, \xi_t$ be the values of these coordinates. Their sum is $2d$. We need to show that $\xi_1^2 + \ldots + \xi_t^2 \ge 2d + 4$. Observe that $\xi_1 - 1, \ldots, \xi_t - 1$ are non-negative integers and their sum is $2d - t \ge 2$. Clearly this implies that $(\xi_1 - 1)^2 + \ldots + (\xi_t - 1)^2 \ge 2$. Indeed, otherwise the sum of $\xi_1 - 1, \ldots, \xi_t - 1$ is either 0 or 1. Hence
\begin{align*}
\xi_1^2 + \ldots + \xi_t^2 = (\xi_1 - 1)^2 + \ldots + (\xi_t - 1)^2 + 4d - t \ge 2 + 4d - t \ge 2d + 4.
\end{align*}

\section{Proof of Proposition \ref{hitting_distributions_lower_bound}}

Denote $s = 2^{k - h}$. Assume that there is a $0$-monochromatic rectangle $A\times B$ of $g$ such that $|A| \ge s$ and $B \ge s$. Then clearly the proposition is true for $b = 1$ and $X = A, Y = B$.

Now assume that if $A\times B$ is a 0-monochromatic rectangle of $g$, then either $|A| < s$ or $B < s$. Take $\mathcal{X}, \mathcal{Y}$ independently and uniformly at random from the set of all $s$-element subsets of $\{0, 1\}^k$. Fix any 0-monochromatic rectangle $A\times B$ of $g$. Let us show that $\mathcal{X}\times \mathcal{Y}$ intersects $A\times B$ with probability at most $2^{k - 2h + 1}$. Indeed, assume WLOG  that $|A| < s$. Then
\begin{align*}
\Pr[\mathcal{X}\times \mathcal{Y} \cap A\times B \neq \varnothing] &\le \Pr[\mathcal{X}\cap A \neq \varnothing] = 1 - \frac{\binom{2^k - |A|}{s}}{\binom{2^k}{s}}\le 1 -  \frac{\binom{2^k - s}{s}}{\binom{2^k}{s}}
\end{align*}

Since $h\ge 1$, we have that $s\le 2^{k}/2$. Applying  Lemma \ref{binomial_fraction} we obtain:

$$\Pr[\mathcal{X}\times \mathcal{Y} \cap A\times B \neq \varnothing] \le \frac{s^2}{2^k - s} \le \frac{s^2}{2^{k}/2} = 2^{k - 2h + 1}.$$

Due to the standard averaging argument this means that for any probability distribution $\mu$ over $0$-monochromatic rectangles of $g$ it is possible to fix $\mathcal{X} = X, \mathcal{Y} = Y$ in such a way that
$$\Pr_{R\sim \mu}[R \cap X\times Y \neq \varnothing] \le 2^{k - 2h + 1}.$$

\section{Proof of Proposition \ref{size_proposition}} 

\emph{Proof of the first item.} Let $\mu$ be $\left(\frac{1}{20}, h\right)$-hitting distribution over $b$-monochromatic rectangles of $g$. Consider $c2^k$  independent random variables
$$\mathcal{R}_1, \ldots, \mathcal{R}_{c2^k},$$
where each $\mathcal{R}_i$ is distributed according to $\mu$. For any fixed $X, Y\subset\{0, 1\}^k$ of size at least $2^{k - h}$ it holds that the probability that $\mathcal{R}_i$ intersects $X\times Y$ is at least $1 - 1/20$. Due to standard Chernoff Bound, if $c> 0$ is large enough contant, then the probability that at least $c2^k/10$ rectangles among $\mathcal{R}_1, \ldots, \mathcal{R}_{c2^k}$ are disjoint with $X\times Y$ is smaller that $2^{-2 \cdot 2^{k}}$.  This means that it is possible to fix $\mathcal{R}_1 = R_1, \ldots, \mathcal{R}_{c2^k} = R_{c2^k}$ in such a way that \emph{for all}  $X, Y\subset\{0, 1\}^k$ of size at least $2^{k - h}$ there are at most $c2^k/10$ rectangles among $R_1, \ldots, R_{c2^k}$ which are disjoint with $X\times Y$. Therefore uniform distribution on the (multi)set $\{R_1, \ldots, R_{c2^k}\}$ is $\left(\frac{1}{10}, h\right)$-hitting distribution over $b$-monochromatic rectangles of $g$. Its support is of size at most $c2^k = 2^{O(k)}$.

\emph{Proof of the second item.} Take any $b\in\{0, 1\}$. Let the support of $\mu_b$ be $\{U_1\times V_1, \ldots, U_s\times V_s\}$. Note that $\mu_{1 - b}$ never ``hits'' $U_i\times V_i$. Indeed, $\mu_{1 -  b}$ is over $(1 - b)$-monocromatic rectangles and $U_i\times V_i$ is $b$-monochromatic. Since $\mu_b$ is $(\delta, h)$-hitting, this means that for every $i\in\{1, 2, \ldots, s\}$ either $U_i$ or $V_i$ is of size less than $2^{k - h}$. Therefore $X\times Y$ is disjoint with $U_i\times V_i$ for all $i\in\{1, 2, \ldots, s\}$, where 
$$X = \{0, 1\}^k \setminus \left( \bigcup\limits_{i : |U_i| < 2^{k - h}} U_i\right), \qquad Y = \{0, 1\}^k \setminus \left( \bigcup\limits_{i : |V_i| < 2^{k - h}} V_i\right).$$

Since $\mu_b$ is $b$-monochromatic, this means that either $X$ or $Y$ is of size less than $2^{k - h}$. On the other hand
$$|X| \ge 2^{k} - s2^{k - h}, \qquad |Y| \ge 2^{k} - s2^{k - h}.$$
Hence $2^{k - h} > 2^{k} - s2^{k - h}$, which means that $s \ge 2^{h}$.

\section{Hitting distributions for $\DISJ^m_k$}

\begin{proof}[Proof of Proposition \ref{disj_0}]

Take $I\in[m]$ uniformly at random and define
$$U_I = \left\{b\in \binom{[m]}{k} : I \in b\right\}.$$

Note that $U_I \times U_I$ is a 0-monochromatic rectangle for $\DISJ^m_k$.

Assume that $X \subset \binom{[m]}{k}$   is such that $|X|\ge \binom{m}{k} \cdot 2^{- \left\lfloor 0.01 k \right\rfloor}$. By Lemma \ref{second_binomial_lemma} this means that $|X| \ge \binom{0.99 m}{k}$. Hence the union of all subsets from $X$ has size at least $0.99 m$. This means the probability that $U_I$ is disjoint with $X$  is at most $0.01$.
\end{proof}

For the proof of Proposition \ref{disj_1} we need the notion of statistical distance. Let $\mu$ and $\nu$ be two probability distribution on the set $A$. Define statistical distance between $\mu, \nu$ as follows:
$$\delta(\mu, \nu) = \max\limits_{B\subset A} |\mu\{B\} - \nu\{B\}|.$$

We will need the following  feature of statistical distance: let $\mu$ be a probability distribution on $A$, let $B$ be the subset of $A$ and let $\mu | B$ denote the restriction of  $\mu$ to $B$. In other words, if the random variables $X$ has distribution $\mu$, then $\mu|B$ is the distribution of $X$ conditioned on $X\in B$. One can easily see that $\delta( \mu,\, \mu | B) = 1- \mu\{B\}$.

\begin{proof}[Proof of Proposition \ref{disj_1}]
Let $h, t$ be as follows:
$$h = \left\lceil (\log_2 m)/8 \right \rceil, \qquad t = \left\lceil m^{1/7}\right\rceil.$$

We will construct a $\left(\frac{1}{10}, h \right)$-hitting distribution over 1-monochromatic rectangles of $\DISJ^m_k$.  Assume that $X\subset \binom{[m]}{k}$ is such that $|X| \ge \binom{m}{k} \cdot 2^{-h}$. Consider the following iterative random process. Take $J_1\in [m]$ uniformly at random, then take $J_2\in[m]/\{J_1\}$ uniformly at random and so on. Set
$$A = \{J_1, J_2, \ldots, J_{m/2}\}.$$
Note that $A$ is distributed uniformly in $\binom{[m]}{m/2}$. Define
$$U_A = \left\{b\in \binom{[m]}{k} : b\subset A\right\},\qquad V_A = \left\{b\in \binom{[m]}{k} : b\subset [m]/A\right\}.$$

Clearly, $U_A \times V_A$ is a 1-monochromatic rectangle for $\DISJ^m_k$. Our goal is to show that $U_A$ intersects  $X$ with probability at least $0.99$ (the same will be true for $V_A$ as $V_A$ is distributed exactly as $U_A$).

For every $i\in[t]$ define
$$S_i = \{J_{k(i - 1) + 1}, \ldots, J_{k (i - 1) + k}\}.$$
Note that $S_1, \ldots, S_t$ are disjoint and $S_1, \ldots, S_t \subset A$. We will show that with probability at least $0.99$ there is $i\in[t]$ such that $S_i\in X$.

This will be done in two steps. First of all, consider $t$ auxiliary random variables $R_1, \ldots, R_t\in\binom{[m]}{k}$. They are mutually independent and every $R_j$ is uniformly distributed in $\binom{[m]}{k}$. We shall show two things:
\begin{itemize}
\item the distribution of $(R_1, \ldots, R_t)$ is close in statistical distance to the distribution of $(S_1, \ldots, S_t)$;
\item with high probability $\{R_1, \ldots, R_t\}$ contains an element from $X$.
\end{itemize}

The probability that $\{S_1, \ldots, S_t\}$ is disjoint with $X$ is at most the probability that $\{R_1, \ldots, R_t\}$ is disjoint with $X$ plus $\delta\left( (R_1, \ldots, R_t), (S_1, \ldots, S_t)\right)$.

\begin{lemma}
\label{statistical_distance_lemma}
$\delta\left( (R_1, \ldots, R_t), (S_1, \ldots, S_t)\right) \le\frac{k^2 \cdot t^2}{m - k}$.
\end{lemma}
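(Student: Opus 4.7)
The plan is to view the distribution of $(S_1,\ldots,S_t)$ as a conditioning of the distribution of $(R_1,\ldots,R_t)$. Let $\mathcal{A}=\binom{[m]}{k}^t$ and let $\mathcal{D}\subset\mathcal{A}$ be the set of ordered tuples $(T_1,\ldots,T_t)$ of pairwise disjoint $k$-subsets of $[m]$. The distribution $\mu$ of $(R_1,\ldots,R_t)$ is uniform on $\mathcal{A}$. On the other hand, $(S_1,\ldots,S_t)$ is obtained by sampling a uniformly random ordered sequence of $kt$ distinct elements of $[m]$ and cutting it into $t$ blocks of length $k$; by symmetry this distribution is uniform on $\mathcal{D}$. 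In other words, the distribution of $(S_1,\ldots,S_t)$ is exactly $\mu\,|\,\mathcal{D}$. By the property of statistical distance mentioned just before the lemma, one then has
\[
\delta\bigl((R_1,\ldots,R_t),(S_1,\ldots,S_t)\bigr)\;=\;1-\mu\{\mathcal{D}\}\;=\;\Pr\bigl[\exists\,i<j:R_i\cap R_j\neq\varnothing\bigr].
\]

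It thus remains to upper bound the probability that two of the independent uniform $k$-subsets $R_1,\ldots,R_t$ collide. I would apply the union bound over pairs: the probability is at most $\binom{t}{2}\Pr[R_1\cap R_2\neq\varnothing]$. For a fixed value of $R_2$, the probability that a uniformly random $R_1\in\binom{[m]}{k}$ avoids $R_2$ is $\binom{m-k}{k}/\binom{m}{k}$, so
\[
\Pr[R_1\cap R_2\neq\varnothing]\;=\;1-\frac{\binom{m-k}{k}}{\binom{m}{k}}.
\]
Since $k<m^{1/3}$, for $m$ large enough we have $k\le m/2$ and Lemma \ref{binomial_fraction} applies, giving $1-\binom{m-k}{k}/\binom{m}{k}\le k^2/(m-k)$.

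Putting the two bounds together,
\[
\delta\bigl((R_1,\ldots,R_t),(S_1,\ldots,S_t)\bigr)\;\le\;\binom{t}{2}\cdot\frac{k^2}{m-k}\;\le\;\frac{k^2 t^2}{m-k},
\]
which is the required inequality. The main (minor) obstacle is just matching the exact form of the bound stated in the lemma: the union bound naturally produces a $\binom{t}{2}$ factor, and the somewhat lossy pass to $t^2$ is what forces the $m-k$ in the denominator rather than $m$. Lemma \ref{binomial_fraction} is the natural tool that introduces this $m-k$ in a clean way.
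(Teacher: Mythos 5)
Your proof is correct and follows essentially the same approach as the paper: both identify $(S_1,\ldots,S_t)$ as the distribution of $(R_1,\ldots,R_t)$ conditioned on pairwise disjointness, reduce the statistical distance to $\Pr[\lnot E]$ via the stated property, and bound this by a union bound over pairs together with Lemma \ref{binomial_fraction}. The only cosmetic difference is that you explicitly write $\binom{t}{2}$ before relaxing to $t^2$.
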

\begin{proof}
Let $E$ denote the event that $R_1, \ldots, R_t$ are pairwise disjoint. Note that  distribution of $(S_1, \ldots, S_t)$ is equal to  conditional distribution $(R_1, \ldots, R_t)|E$ (this is due to the fact that distribution of $(S_1, \ldots, S_t)$ is uniform on its support). Thus $\delta\left( (R_1, \ldots, R_t), (S_1, \ldots, S_t)\right) = \Pr[\lnot E]$. The probability that $R_1$ and $R_2$ are not disjoint is equal to 
$1 - \binom{m - k}{k} /\binom{m}{k}$ and the latter by Lemma \ref{binomial_fraction} is at most $\frac{k^2}{m - k}$.
Hence from the union bound it follows that $\Pr[\lnot E] \le \frac{k^2 \cdot t^2}{m - k}$, as required.
\end{proof}

For every $i\in[t]$ we have that $R_i\in X$ with probability at least $|X|/\binom{m}{k} \ge 2^{-h}$. Hence
\begin{align*}
\Pr[X \cap \{S_1, \ldots, S_t\} = \varnothing] &\le \Pr[X\cap \{R_1, \ldots, R_t\} = \varnothing] + \\ &\delta\left( (R_1, \ldots, R_t), (S_1, \ldots, S_t)\right)\\
&\le (1 - 2^{-h})^t + \frac{k^2 \cdot t^2}{m - k}\\
&\le \exp\{-2^{-h} \cdot t\} + \frac{k^2 \cdot t^2}{m - k}.
\end{align*}
If $h$ and $t$ are as above, then for all large enough $m$ the last expression is at most $0.01$.
\end{proof}

\vspace{0.4cm}
\textbf{Acknowledgments.} I would like to thank Andrei Romashchenko and Nikolay Vereshchagin for help in writing this paper.

\appendix

\section{Proof of Lemma \ref{binomial_fraction}}
\hspace{\parindent}\emph{Proof of the first item.} 
\begin{align*}
\frac{\binom{m - k}{k}}{\binom{m}{k}}  &=  \frac{m - k}{m} \cdot \frac{m - k - 1}{m - 1} \cdot \ldots \cdot \frac{m - 2k + 1}{m - k + 1} \\
&\ge  \left(\frac{m - 2k}{m - k}\right)^k \\
&= \left(1 - \frac{k}{m - k}\right)^k \\
&\ge 1 - \frac{k^2}{m - k}.
\end{align*}
The first inequality here is due to the fact that for all positive $i$ we have:
\begin{align*}
 \frac{k}{m - k + i} \le \frac{k}{m - k}&\implies 1 - \frac{k}{m - k + i} \ge 1 -\frac{k}{m - k} \\
&\implies \frac{m - 2k + i}{m - k + i} \ge \frac{m - 2k}{m - k}.
\end{align*}
The second inequality here is Bernoulli's inequality. It is legal to apply this inequality because $k\le m/2$ and hence $\frac{k}{m - k} \le 1$.

\section{Proof of Lemma \ref{second_binomial_lemma}}
 For every $-1 < x < 0$ we have $$\log_2(1 + x) = \ln(1 + x) /\ln(2) \le x/\ln(2) \le x.$$

Hence
\begin{align*}
\log_2\left( \binom{m}{k}/\binom{0.99 m}{k}\right) &= \log_2 \frac{m \cdot (m - 1) \ldots \cdot (m - k + 1)}{ (0.99 m) \cdot (0.99 m - 1) \ldots \cdot (0.99 m - k + 1)}\\
&= -\sum\limits_{i = 0}^{k - 1} \log_2\left( \frac{0.99 m - i}{m - i}\right)\\
&= -\sum\limits_{i = 0}^{k - 1} \log_2\left(1 + \frac{-0.01 m}{m - i}\right) \\
&\ge -\sum\limits_{i = 0}^{k - 1}   \frac{-0.01 m}{m - i} \ge 0.01 k.
\end{align*}

\section{Proof of Lemma \ref{square_lemma}.}
Let us prove the first statement of the lemma. Assume that $j\in \{1, 2, \ldots, q^2-1\}$ is an odd integer. We will show that $\alpha^j$ is not a square. Indeed, assume for contradiction  there is a non-zero $y\in\mathbb{F}_{q^2}$ such that $\alpha^j = y^2$. Therefore for some integer $i$ we have that $\alpha^{j - 2i} = 1$. Since $\alpha$ is the primitive root of $\mathbb{F}_{q^2}$, this means that $j - 2i$ is divisible by $q^2 - 1$. But $j - 2i$ is odd and $q^2 - 1$ is even. 

To show the second statement of the lemma assume that $x = \alpha^k$ is a non-zero root of $x^q = x$. Then we have that $\alpha^{k(q - 1)} = 1$. Due to the same argument as above $k(q - 1)$ is divisible by $q^2 -1$. This implies that $k$ is divisible by $q+ 1$. Hence $k$ is even and $x = \alpha^k$ is a square.

\end{document}